\documentclass[lettersize,journal]{IEEEtran}
%\usepackage[caption=false,font=normalsize,labelfont=sf,textfont=sf]{subfig}

% Some Computer Society conferences also require the compsoc mode option,
% but others use the standard conference format.
%
% If IEEEtran.cls has not been installed into the LaTeX system files,
% manually specify the path to it like:
% \documentclass[conference]{../sty/IEEEtran}

\usepackage{amsmath,amsfonts}
\usepackage{CJKutf8}
\usepackage{mathtools}                % 进阶数学排版
\usepackage{mathrsfs}                 % 花体 mathscr
\usepackage{stmaryrd}                 % 支持 \llbracket \rrbracket 等符号
\usepackage{algorithmic}
\usepackage{graphicx}
\usepackage{textcomp}
\usepackage{xcolor}
\usepackage{subfigure}
\usepackage{amsthm}
\usepackage{url}
\usepackage{threeparttable}
\usepackage{framed}
\usepackage{multicol}
\usepackage{mdwlist}
\usepackage{booktabs}
\usepackage{changepage}
\usepackage{array}
\usepackage{tabularx}
\usepackage{threeparttable}
\usepackage{caption}
\usepackage{multirow}
\usepackage{colortbl, hhline}
\usepackage{wasysym} % 提供特殊符号
\usepackage{pifont}
\newcommand{\RNum}[1]{\uppercase\expandafter{\romannumeral #1\relax}}

\newtheorem{mytheorem}{Theorem}

\newtheorem{definition}{Definition}

\hyphenation{op-tical net-works semi-conduc-tor IEEE-Xplore}
\def\BibTeX{{\rm B\kern-.05em{\sc i\kern-.025em b}\kern-.08em
    T\kern-.1667em\lower.7ex\hbox{E}\kern-.125emX}}
\usepackage{balance}

%% \BibTeX command to typeset BibTeX logo in the docs
\AtBeginDocument{%
  \providecommand\BibTeX{{%
    Bib\TeX}}}

% Some very useful LaTeX packages include:
% (uncomment the ones you want to load)

% *** MISC UTILITY PACKAGES ***
%
%\usepackage{ifpdf}
% Heiko Oberdiek's ifpdf.sty is very useful if you need conditional
% compilation based on whether the output is pdf or dvi.
% usage:
% \ifpdf
%   % pdf code
% \else
%   % dvi code
% \fi
% The latest version of ifpdf.sty can be obtained from:
% http://www.ctan.org/pkg/ifpdf
% Also, note that IEEEtran.cls V1.7 and later provides a builtin
% \ifCLASSINFOpdf conditional that works the same way.
% When switching from latex to pdflatex and vice-versa, the compiler may
% have to be run twice to clear warning/error messages.

% *** CITATION PACKAGES ***
%
%\usepackage{cite}
% cite.sty was written by Donald Arseneau
% V1.6 and later of IEEEtran pre-defines the format of the cite.sty package
% \cite{} output to follow that of the IEEE. Loading the cite package will
% result in citation numbers being automatically sorted and properly
% "compressed/ranged". e.g., [1], [9], [2], [7], [5], [6] without using
% cite.sty will become [1], [2], [5]--[7], [9] using cite.sty. cite.sty's
% \cite will automatically add leading space, if needed. Use cite.sty's
% noadjust option (cite.sty V3.8 and later) if you want to turn this off
% such as if a citation ever needs to be enclosed in parenthesis.
% cite.sty is already installed on most LaTeX systems. Be sure and use
% version 5.0 (2009-03-20) and later if using hyperref.sty.
% The latest version can be obtained at:
% http://www.ctan.org/pkg/cite
% The documentation is contained in the cite.sty file itself.

% *** GRAPHICS RELATED PACKAGES ***
%
\ifCLASSINFOpdf
  % \usepackage[pdftex]{graphicx}
  % declare the path(s) where your graphic files are
  % \graphicspath{{../pdf/}{../jpeg/}}
  % and their extensions so you won't have to specify these with
  % every instance of \includegraphics
  % \DeclareGraphicsExtensions{.pdf,.jpeg,.png}
\else
  % or other class option (dvipsone, dvipdf, if not using dvips). graphicx
  % will default to the driver specified in the system graphics.cfg if no
  % driver is specified.
  % \usepackage[dvips]{graphicx}
  % declare the path(s) where your graphic files are
  % \graphicspath{{../eps/}}
  % and their extensions so you won't have to specify these with
  % every instance of \includegraphics
  % \DeclareGraphicsExtensions{.eps}
\fi
\hyphenation{op-tical net-works semi-conduc-tor IEEE-Xplore}

\begin{document}
%
% paper title
% Titles are generally capitalized except for words such as a, an, and, as,
% at, but, by, for, in, nor, of, on, or, the, to and up, which are usually
% not capitalized unless they are the first or last word of the title.
% Linebreaks \\ can be used within to get better formatting as desired.
% Do not put math or special symbols in the title.
\title{CCN: Decentralized Cross-Chain Channel Networks Supporting Secure and Privacy-Preserving Multi-Hop  Interactions}

\author{
Minghui Xu,
Yihao Guo, 
Yanqiang Zhang,
Zhiguang Shan,  
Guangyong Shang,
Zhen Ma, \IEEEmembership{Member, IEEE},
Bin Xiao,  \IEEEmembership{Fellow, IEEE},
and Xiuzhen Cheng, \IEEEmembership{Fellow, IEEE}
\thanks{
M. Xu and X. Cheng are with the School of Computer Science and Technology,
Shandong University, Qingdao, Shandong, P.R. China 
(e-mail: \{mhxu, xzcheng\}@sdu.edu.cn).}
\thanks{
Y. Guo and B. Xiao are with the Department of Computing,
The Hong Kong Polytechnic University, Hong Kong SAR, P.R. China 
(e-mail: \{yihaoguo, b.xiao\}@polyu.edu.hk).}
\thanks{
Y. Zhang and Z. Shan are with the Department of Informatization and Industry Development,
State Information Center, Beijing, P.R. China 
(e-mail: \{zhangyq, shanzg\}@sic.gov.cn).}

\thanks{
G. Shang and Z. Ma is with Inspur Yunzhou Industrial Internet Co., Ltd, Jinan 250101, China (email: \{shangguangyong, mazhenrj\}@inspur.com).}

\thanks{Corresponding author: Yihao Guo.}
}

% author names and affiliations
% use a multiple column layout for up to three different
% affiliations
% \author{\IEEEauthorblockN{Michael Shell}
% 	\IEEEauthorblockA{Georgia Institute of Technology\\
% 		someemail@somedomain.com}
% 	\and
% 	\IEEEauthorblockN{Homer Simpson}
% 	\IEEEauthorblockA{Twentieth Century Fox\\
% 		homer@thesimpsons.com}
% 	\and
% 	\IEEEauthorblockN{James Kirk\\ and Montgomery Scott}
% 	\IEEEauthorblockA{Starfleet Academy\\
% 		someemail@somedomain.com}}

% make the title area
\maketitle

% As a general rule, do not put math, special symbols or citations
% in the abstract
\begin{abstract}
Cross-chain technology enables interoperability among otherwise isolated blockchains, supporting interactions across heterogeneous networks. Similar to how multi-hop communication became fundamental in the evolution of the Internet, the demand for multi-hop cross-chain interactions is gaining increasing attention. This extended interaction model builds upon earlier single-hop designs and requires coordination among multiple intermediate nodes to ensure correct cross-chain execution. However, this growing demand introduces new security and privacy challenges. On the security side, multi-hop interactions depend on the availability of multiple participating nodes. If any node becomes temporarily offline during execution, the protocol may fail to complete correctly, leading to settlement failure or fund loss. On the privacy side, the need for on-chain transparency to validate intermediate states may unintentionally leak linkable information, compromising the unlinkability of user interactions. In this paper, we propose the Cross-Chain Channel Network (CCN), a decentralized network designed to support secure and privacy-preserving multi-hop cross-chain transactions. Through experimental evaluation, we identify two critical types of offline failures, referred to as active and passive offline cases, which have not been adequately addressed by existing solutions. To mitigate these issues, we introduce R-HTLC, a core protocol within CCN. R-HTLC incorporates an hourglass mechanism and a multi-path refund strategy to ensure settlement correctness even when some nodes go offline during execution. Importantly, CCN addresses not only the correctness under offline conditions but also maintains unlinkability in such adversarial settings. To overcome this, CCN leverages zero-knowledge proofs (ZKPs) and off-chain coordination, ensuring that interaction relationships remain indistinguishable even when certain nodes are temporarily offline. Finally, our experiments demonstrate that CCN effectively prevents prolonged fund unavailability and mitigates the risk of fund loss caused by offline failures. 

\end{abstract}

% no keywords

\begin{IEEEkeywords}
Cross-Chain, Offline Issues, Privacy, Multi-Hop Interactions. 
\end{IEEEkeywords}

% For peer review papers, you can put extra information on the cover
% page as needed:
% \ifCLASSOPTIONpeerreview
% \begin{center} \bfseries EDICS Category: 3-BBND \end{center}
% \fi
%
% For peerreview papers, this IEEEtran command inserts a page break and
% creates the second title. It will be ignored for other modes.
%\IEEEpeerreviewmaketitle

\section{Introduction}~\label{sec:Introduction}
%
%
% Cross-chain technology has been recognized as an effective way to facilitate communication between isolated blockchains, thereby meeting the needs of users whose applications operate across multiple chains~\cite{belchior2021survey}. 
% %
% Existing cross-chain solutions can be classified as centralized or decentralized~\cite{belchior2021survey,xu2024exploring,augusto2024sok}. Compared to centralized solutions, decentralized approaches enhance security by avoiding third-party involvement but reduce the efficiency of cross-chain interactions~\cite{zamyatin2019xclaim, tsabary2021mad, thyagarajan2022universal, wadhwa2022he}. 
%
%Yihao: 说明研究多跳的合理性
Cross-chain technology has become a key enabler for interaction among otherwise isolated blockchains, supporting a wide range of applications across heterogeneous networks~\cite{belchior2021survey,lao2020survey,xu2024exploring,augusto2024sok}. 
As of November 2025, CoinMarketCap records more than ten thousand active cryptocurrencies~\cite{CoinMarketCap}.  
The rise in blockchain deployments has resulted in higher cross-chain demand, as shown by the growth of cross-chain transaction volume from \$18.6~billion in September 2024 to nearly \$50~billion in November 2024, an increase of about 168\%. Similar to how multi-hop routing became fundamental in the development of the Internet, multi-hop cross-chain interaction is emerging as a practical and necessary approach for achieving end-to-end interoperability. Multiple projects, including Cosmos, Polkadot, and Uniswap, have recognized this need in their design goals or technical roadmaps~\cite{IBC,XCM,uniswap}, highlighting the growing importance of supporting multi-hop coordination across chains. 

To meet the growing demand for large-scale cross-chain interaction, a recent line of work has introduced cross-chain channels~\cite{guo2023cross,jia2023cross,zhang2023cross,luo2024crosschannel,guo2025xrwa} to resolve the trade-off between centralization and scalability.  
This design can be integrated into existing cross-chain frameworks, yielding about a factor of $N$ efficiency improvement when processing $N$ transactions~\cite{guo2023cross}. 
However, similar to most existing solutions~\cite{zamyatin2019xclaim, tsabary2021mad, thyagarajan2022universal, xie2022zkbridge, hanzlik2024sweep, guo2024zkCross, han2024p2c2t}, current cross-chain channels focus primarily on interactions between two chains. When extending to the multi-hop domain, new security and privacy challenges emerge, which are not adequately addressed. 
This motivates us to explore whether it is possible to design a multi-hop cross-chain channel network that inherits the efficiency of existing cross-chain channel designs while, more importantly, satisfying the security and privacy requirements of multi-hop interaction. 

\subsection{Where Existing Approaches Fall Short}
A cross-chain channel consists of four phases.  
First, both parties lock their assets to open the channel.  
Second, the parties exchange and update the state off-chain.  
Third, the channel is settled on-chain and then closed.  
If settlement fails, the protocol enters a timeout phase in which both parties refund their locked assets. A multi-hop cross-chain channel involves more than two parties in every phase of the protocol, introducing the following additional challenges (\textbf{[C1-C3]}). %In what follows, we analyze the inherent challenges. 

\noindent \textbf{[C1] Active offline issues.} %
Multi-hop interactions rely on intermediary nodes to relay messages. This increases the risk that adversarial intermediaries may disrupt the protocol by discarding, delaying, or selectively refusing to relay messages, preventing honest parties from completing transactions in a timely manner. We refer to this as the active offline issue. More critically, multi-hop settings involve multiple blockchains, resulting in a greater number of required on-chain interactions and longer execution time. Once an active offline issue occurs, the affected party's funds remain locked on the blockchain for an extended period and cannot be used. Active offline issues arise  before the settlement phase, because once one party settles, the other party can obtain the secret necessary to complete its settlement.  

To solve active offline issues, current solutions attempt to employ premium mechanisms to compensate the victim or shorten the time interval for time locks on both blockchains~\cite{han2019optionality,xue2021hedging,mazumdar2022towards,liu2018atomic,ding2022lilac}. 
However, the first approach fails to address the fundamental issue of unavailable funds, leaving the victim unable to access locked funds. Meanwhile, the second method compromises the security of the interaction. For example, in Interledger~\cite{Interledger}, a default time window is set for completing the protocol.  
Shortening this window would leave one party without sufficient time to finish its required actions. Therefore, the key to solving this issue lies in enabling the other party to access the locked fund without waiting for a timeout when one party goes offline maliciously. %

\noindent \textbf{[C2] Passive offline issues.} 
Passive offline issues occur during the settlement phase. After one party has completed settlement, another party may become offline due to external factors (e.g., network problems, hardware failures, software crashes) and fail to complete its settlement, which causes financial loss.  
Current solutions~\cite{guo2023cross} to the passive offline issue rely on an incentive strategy. 
This strategy is based on the existing design in which the two interacting parties share the same hash lock, which allows honest nodes to detect the offline status of either party. 
However, this approach contradicts the goal of providing unlinkability through the use of independent hash locks~\cite{guo2024zkCross,heilman2017tumblebit,malavolta2017concurrency,deshpande2020privacy,tairi20212}. 
The core challenge in addressing this issue is how to enable honest nodes to assist offline nodes in completing the protocol when using independent hash locks. The difficulty arises from the inability of honest nodes to determine whether a node is truly offline when independent hash locks are employed. 

\noindent \textbf{[C3] Compromising unlinkability.} 
Multi-hop interactions require multiple transactions to ensure atomicity. This increases the risk that adversaries correlate information to compromise unlinkability~\cite{xu2024exploring}. 
Unlinkability refers to the property that prevents adversaries from linking a sender to a receiver with more than negligible advantage. It is often formalized through a security game in which an adversary is given a set of possible sender–receiver pairs and must identify the actual pair based on observed transactions~\cite{guo2023cross,malavolta2017concurrency,deshpande2020privacy,tairi20212}. A system achieves unlinkability if the adversary cannot outperform random guessing. 
While certain existing works aim to achieve cross-chain unlinkability, they often come with limitations. Some solutions are designed for specific use cases~\cite{zhang2021privacy, yi2022ccubi}, while others rely on anonymous blockchains~\cite{li2022zerocross, baldimtsi2021anonymous}, which may not always be available. The mechanism in~\cite{deshpande2020privacy} achieves unlinkability for cross-chain exchanges but assumes that both parties share a secret, which is impractical when dealing with untrusted users. 
zkCross~\cite{guo2024zkCross} removes these assumptions and enables unlinkable cross-chain interactions. 
However, it restricts users to fixed fund amounts, which limits flexibility in meeting diverse user needs.
In addition, several techniques originally developed for single-chain settings~\cite{malavolta2017concurrency,qin2023blindhub,ge2023accio} have attempted to support unlinkability. However, these approaches typically assume that all participants remain online. As discussed in our analysis of passive offline issues {\bf [C2]}, current designs for handling offline scenarios often conflict with unlinkability requirements, making it difficult to achieve both goals within a system. 
% In addition, several techniques originally designed for single-chain~\cite{malavolta2017concurrency,qin2023blindhub,ge2023accio} have attempted to support unlinkability. However, as discussed in our analysis of passive offline issues [C2], these approaches generally assume that all parties remain online and responsive throughout the protocol. When applied to real-world cross-chain settings where temporary node unavailability is common, such assumptions no longer hold. 
%
%Note that, when designing a protocol to achieve unlinkability, practical considerations must be taken into account. The protocol should avoid introducing constraints that reduce usability, such as requiring reliance on anonymous blockchains or enforcing fixed transaction amounts. Therefore, one can see that existing techniques alone are insufficient to meet the cross-chain unlinkability requirements in multi-hop networks. 

Therefore, the current state of the art prompts us to consider the following question: 

\textit{
Is it possible to design a cross-chain channel that supports unlinkable multi-hop interactions while also addressing both active and passive offline issues? } 

To address the above problem, we introduce a cross-chain channel network, denoted as $\mathsf{CCN}$.  
The main contributions of this work are summarized below. 

\subsection{Our Contributions}
 CCN adopts off-chain channels to facilitate cross-chain transactions and proposes a novel protocol $\mathsf{R\text{-}HTLC}$ to achieve multi-hop settlements. $\mathsf{R\text{-}HTLC}$ includes a novel \textit{hourglass mechanism} to solve active offline issues, which allows nodes to spend a portion of the locked funds, ensuring that the funds are not continuously unavailable even in scenarios with prolonged time locks. Furthermore, $\mathsf{R\text{-}HTLC}$ adopts zk-SNARK (Zero-Knowledge Succinct Non-interactive Argument of Knowledge)~\cite{sun2021survey} to generate independent hash locks and provides a \textit{multi-path refund strategy} to address passive offline issues. This strategy imposes constraints on nodes to assist offline nodes and ensures the correctness of settlements in different cases caused by offline issues. 
We present a comparison with previous approaches in Table~\ref{tab:compare}, which shows that 
$\mathsf{CCN}$ inherits the efficiency of cross-chain channels in executing cross-chain transactions and employs $\mathsf{R\text{-}HTLC}$ to complete cross-chain settlements. Compared to current privacy-preserving cross-chain works~\cite{baldimtsi2021anonymous,zhang2023cross,guo2024zkCross}, $\mathsf{CCN}$ can support unlinkable interactions under variable funds without relying on third parties or anonymous blockchains. 
\begin{table}[htb] \footnotesize
\centering
\caption{Comparison Among Different Approaches.}
\vspace{-2pt}
\label{tab:compare}
\begin{threeparttable}
\begin{tabularx}{0.995\linewidth}{|>{\centering\arraybackslash}m{1.77cm}|>{\centering\arraybackslash}m{1.41cm}|>{\centering\arraybackslash}m{1.74cm}|>
{\centering\arraybackslash}m{0.73cm}|>
{\centering\arraybackslash}m{0.97cm}|}
\hline
Scheme & Cross-Chain & Active/ Passive Offline Issue  & Privacy & Efficiency \\
\hline
Refs.~\cite{qin2023blindhub,ge2023accio} & $\Circle$ & $\Circle$ /  $\Circle$  & $\CIRCLE$ & $\CIRCLE$ \\
\hline 
Ref.~\cite{guo2023cross} & $\CIRCLE$ & $\Circle$ / $\CIRCLE$  &  $\Circle$  & $\CIRCLE$ \\
\hline 
Refs.~\cite{xue2021hedging,singh20254} & $\CIRCLE$ & $\CIRCLE$ / $\Circle$  & $\Circle$ & $\Circle$ \\
\hline
Refs.~\cite{guo2024zkCross,han2024p2c2t} & $\CIRCLE$ & $\Circle$ / $\Circle$  & $\CIRCLE$ & $\Circle$ \\
\hline 
Refs.~\cite{thyagarajan2022universal,ni2025pipeswap} & $\CIRCLE$ & $\Circle$ / $\Circle$  & $\Circle$ & $\Circle$ \\
\hline

% UAS~\cite{thyagarajan2022universal} & $\CIRCLE$ & $\CIRCLE$ & $\Circle$ \\
% \hline
% Sweep-UC~\cite{hanzlik2024sweep} & $\CIRCLE$ & $\CIRCLE$ & $\Circle$ \\
% \hline
% % P2C2T~\cite{han2024p2c2t} & $\CIRCLE$ & $\CIRCLE$ & $\Circle$ \\
% % \hline
% PipeSwap~\cite{ni2025pipeswap} & $\CIRCLE$ & $\CIRCLE$ & $\Circle$ \\
%\hline
{\bf Our Work} & $\CIRCLE$ & $\CIRCLE$ / $\CIRCLE$ & $\CIRCLE$ & $\CIRCLE$ \\
\hline
\end{tabularx}
\begin{tablenotes} \scriptsize 
\item[*] $\CIRCLE$ indicates that the scheme provides the property or mitigates the issue; $\Circle$ indicates that it does not. 
\end{tablenotes}
\end{threeparttable}
\vspace{-2pt}
\end{table}

For the sake of convenience, we highlight our contributions as follows: 
\begin{enumerate}
\item 

We propose $\mathsf{CCN}$, a cross-chain channel network that addresses the challenges of active and passive offline issues in multi-hop applications while ensuring unlinkable interactions. $\mathsf{CCN}$ inherits the efficiency of off-chain channels and adopts a novel protocol, $\mathsf{R\text{-}HTLC}$, to achieve multi-hop settlements. 

\item As the key block of $\mathsf{CCN}$,  $\mathsf{R\text{-}HTLC}$ include a novel hourglass mechanism to mitigate active offline issues in multi-hop settlements. This mechanism allows nodes to utilize a portion of locked funds, ensuring that liquidity is not entirely frozen even under prolonged time-lock conditions. 

\item 
To address the passive offline issue, we propose a multi-path refund strategy in $\mathsf{R\text{-}HTLC}$. This strategy considers different settlement cases, including normal completion, active offline, and passive offline scenarios. It adopts suitable settlement methods for each case based on their specific conditions, ensuring that the protocol can handle all situations correctly. 

\item In addition to $\mathsf{R\text{-}HTLC}$, $\mathsf{CCN}$ adopts ZKPs to enable unlinkable multi-hop cross-chain interactions. Unlike existing solutions that rely on anonymous blockchains or enforce fixed transaction amounts, $\mathsf{CCN}$ achieves unlinkability while allowing variable funds, making it more practical for real-world applications. 

\item To validate the performance of our scheme, we conduct extensive experiments on both Ethereum and Cosmos. Our evaluation focuses on three aspects: the effectiveness of the protocol in mitigating offline issues; the additional latency and gas overhead introduced by the privacy-preserving design; and the scheme’s applicability in heterogeneous blockchain environments.  
\end{enumerate}

\section{The Model and Preliminaries}~\label{sec:Preliminaries}
%
%\textcolor{red}{yhguo: Revising}
%
%
In this section, we first introduce our network model, threat model, simulations, and design goals. Then, we provide preliminaries on HTLC, off-chain channels, and ZKPs, which are the basic building blocks of our schemes. 
%
% \begin{CJK}{UTF8}{gbsn}
% \textcolor{red}{TODO: 重新修改Model部分，补充威胁模型的假设、画攻击相关的图}
% \end{CJK}
%
\subsection{Models} \label{subsec: model}
\noindent {\bf Network Model.} 
All entities in our scheme refer to real-world users who own blockchain accounts and can engage in off-chain and on-chain operations. %All on-chain operations are conducted in the form of transactions, which can be used to distinguish between on-chain and off-chain operations in subsequent protocols. 
Furthermore, each entity can possess multiple accounts, with each account equivalent to a node on the blockchain. 
% In a multi-chain network, two nodes of different blockchains can interact through the establishment of a cross-chain channel, and we further consider the following problem: 
% %
% Instead of creating a new cross-chain channel for two given nodes, how can we find a cross-chain path connecting them by utilizing existing channels? 
%
%As we have elaborated in Sec.~\ref{sec:Introduction}, 
%addressing these challenges using existing technologies presents significant difficulties. 
%there exist significant challenges to solve the above problems based on current available technologies. 
%To tackle this issue, we propose 
%All entities can register accounts and access transactions across all chains. 
All blockchains in our scheme are public chains, meaning that any entity can register nodes on all blockchains and access the transactions within them. 
Two nodes from different entities can participate in intra-chain interactions by establishing off-chain channels (also referred to as channels). 
Note that we do not consider the scenario where a single entity creates two accounts to construct an off-chain channel. 
Select one channel from each of the two chains. When the two nodes in both channels belong to the same entity, those channels can form a cross-chain channel, enabling cross-chain interactions.  
For example, if $u_2$ and $u_3$ from two channels $\Omega_{\langle u_1, u_2 \rangle}$ and $\Omega_{\langle u_3, u_4 \rangle}$ belong to the same entity, then the nodes within these channels can perform cross-chain interactions. For convenience, we refer to $u_1$ and $u_4$ as the starting node and the ending node, respectively, while $u_2$ and $u_3$ are termed intermediary nodes. 
Note that,  
%a cross-chain channel can also combine with other off-chain channels or cross-chain channels to form a new cross-chain channel. Therefore, 
%
a cross-chain channel consists of at least two off-chain channels, including one starting node, one ending node, and multiple intermediary nodes. 
%a cross-chain channel is composed of at least two off-chain channels, including one starting node, one ending node, and multiple intermediary nodes. 
Nodes within the cross-chain channel can all act as senders and perform cross-chain interactions with other nodes. All cross-chain channels form a cross-chain channel network. 

All transactions in the network can be divided into two categories: traditional on-chain transactions (also referred to as transactions), which are confirmed and verified through the blockchain consensus mechanism, and off-chain transactions (also referred to as receipts), which exist within channels and are verified by the nodes in those channels. The final result of the receipts is eventually packaged into on-chain transactions, updating the on-chain states of the nodes within the channel. 

\noindent {\bf Threat Model.}
We assume that all participants operate in probabilistic polynomial time (PPT). Adversaries may exhibit malicious  behaviors, including deviating from protocol specifications, delaying messages, or attempting to link transactions. 
Our protocol provides enhanced robustness by additionally addressing the practical challenge of node offline issues, i.e., it can tolerate the offline failure of a node $u_i$ in a channel $\Omega$. %This makes our design more resilient to real-world network uncertainties. 
Moreover, we follow the standard assumption used in current multi-hop privacy schemes~\cite{malavolta2017concurrency,piotrowska2017loopix}, namely that adversaries may corrupt an arbitrary subset of intermediary nodes, under the minimal requirement that the entire path is not fully controlled by the adversary. This assumption is consistent with the threat models adopted in practical anonymous communication systems such as Tor, Nym, and Loopix~\cite{dingledine2004tor,nym}. %This assumption is well-established in prior work and ensures that the privacy of sender-receiver pairs cannot be inferred with probability significantly better than random guessing~\cite{malavolta2017concurrency}. 
We also consider consensus-layer threats. For instance, in PoW-based blockchains, the adversary cannot control more than 50\% of the hash power~\cite{gervais2016security}, and in PBFT-based systems, fewer than one-third of nodes may be Byzantine~\cite{xiao2020survey}. These assumptions ensure the liveness and consistency of the underlying ledgers. 

\noindent{\bf The Simulation of Offline Issues.} 
To demonstrate the presence of offline vulnerabilities, i.e., active and passive offline issues, we simulate their occurrence during the execution of the HTLC protocol, as illustrated in Fig.~\ref{Fig:experiments-offline-1}. The figure illustrates two representative scenarios. In the first scenario, after one node completes its locking operation as specified by the protocol, the counterparty intentionally goes offline before the timelock expires, refusing to perform its expected locking operation. This deviation from the protocol prevents the honest node from proceeding to claim its funds, causing them to remain locked and inaccessible for an extended period. In the second scenario, after both parties have locked their funds as required by the protocol, one party proceeds to unlock its portion successfully. However, the other party remains offline and fails to submit the corresponding unlock transaction before the timeout. This leaves the offline party unable to retrieve its entitled funds: it cannot unlock the counterparty’s funds due to timeout, nor can it refund its own locked funds, resulting in a direct and unrecoverable financial loss. 
These experimental results highlight that offline behaviors can directly compromise the correctness and fairness of cross-chain interactions. 

To address these challenges, our protocol is designed to be resilient against both active and passive offline issues. The corresponding design goals are formalized as follows. 
\begin{figure}[htb]
    \centering
    \subfigure[{\bf Active offline issue}]{\includegraphics[width=0.23\textwidth]{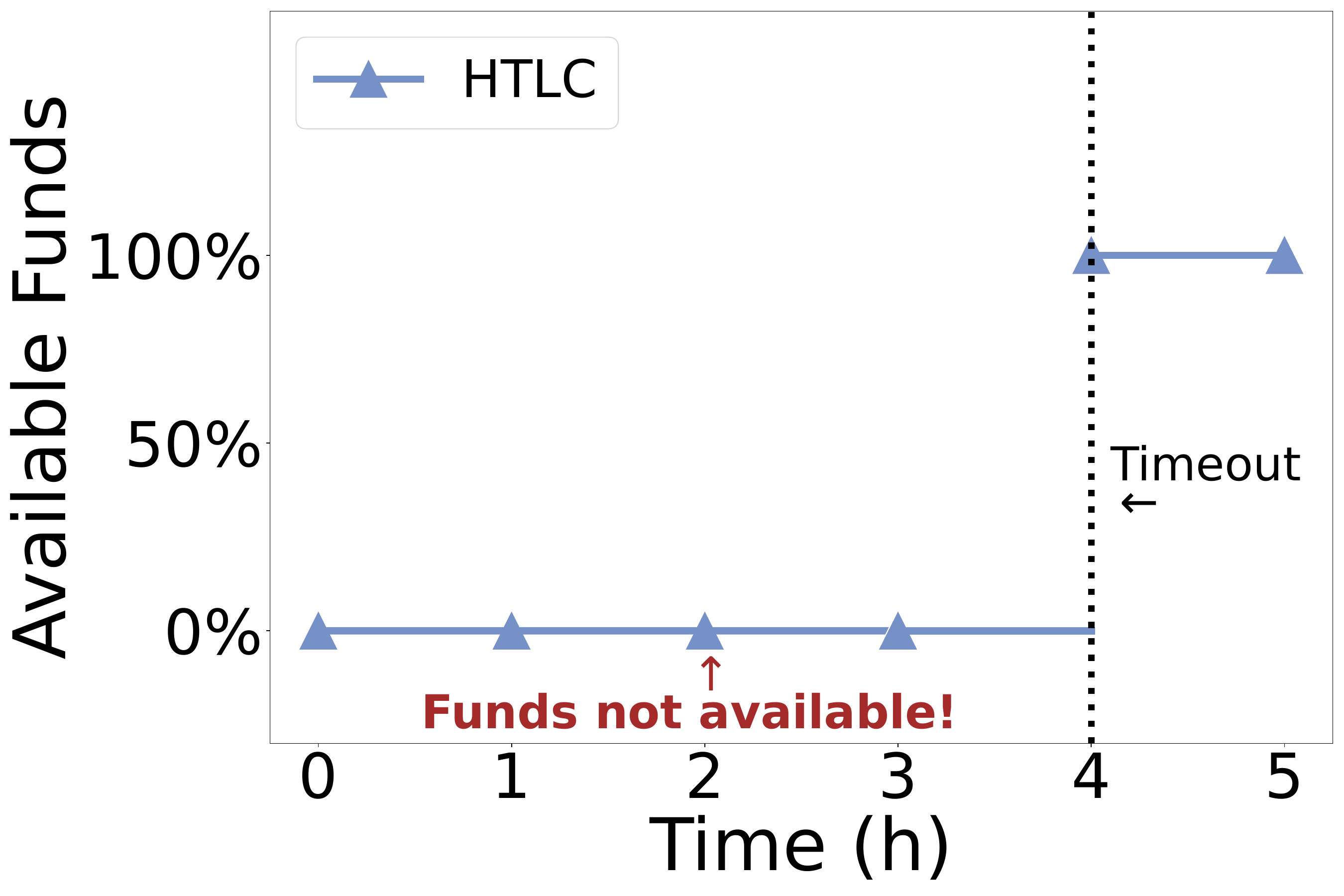}}
    \subfigure[{\bf Passive offline issue}]{\includegraphics[width=0.23\textwidth]{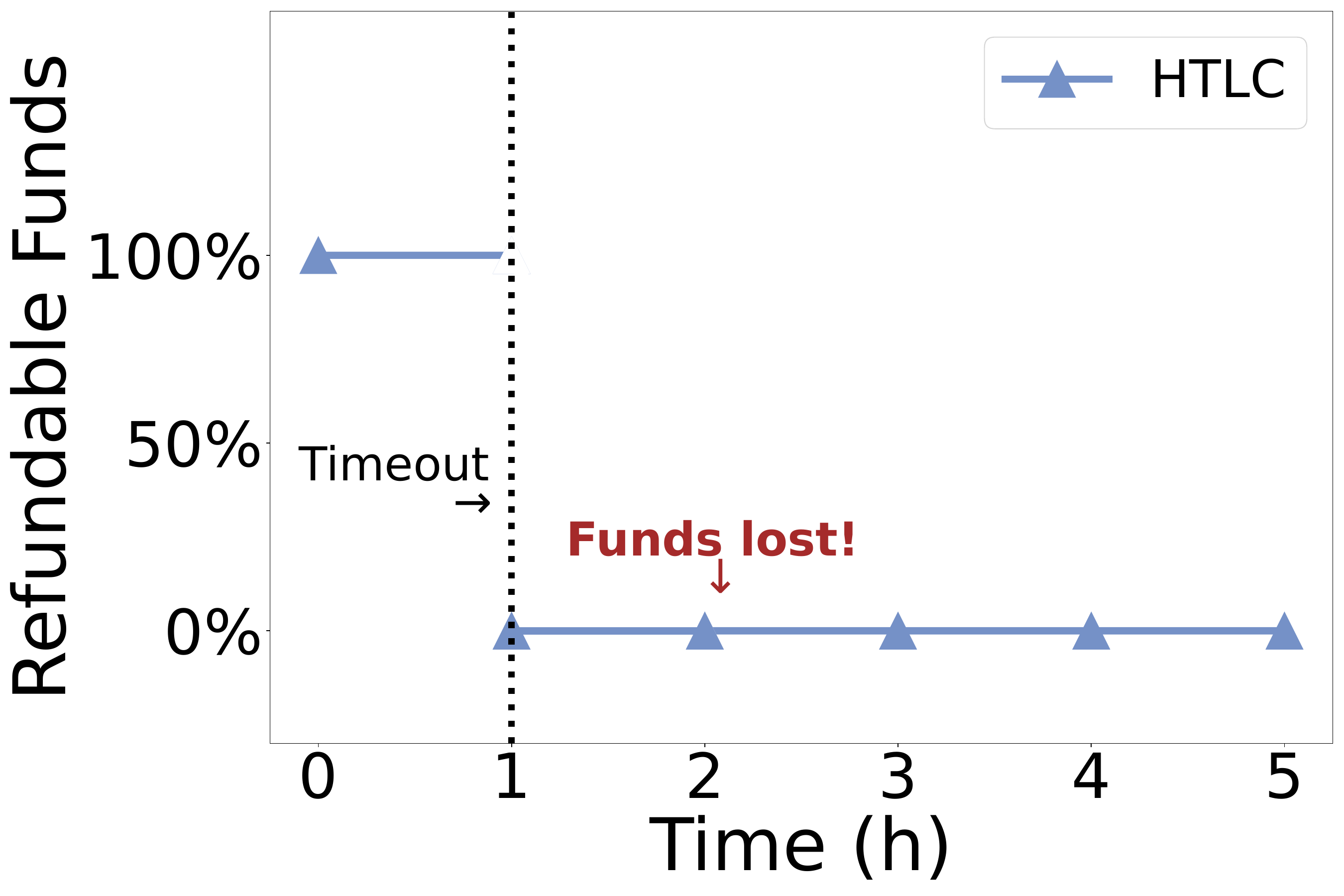}}
    \caption{Simulation of active offline issue (a) and passive offline issue (b). } 
    \label{Fig:experiments-offline-1}
\end{figure}

\noindent {\bf Design Goals.}
In multi-hop cross-chain interactions, a  path consists of multiple off-chain channels, each connecting two adjacent participants. We observe that security vulnerabilities may arise if one party in any of these channels becomes unavailable (offline) during critical stages of the protocol execution.

We identify two categories of such offline-related issues: 
\begin{itemize}
    \item \textbf{Active Offline.} In any two-party channel $\Omega(u_i, u_j)$, an active offline event occurs when, after $u_i$ has completed the locking transaction, the counterparty $u_j$ intentionally goes offline instead of completing its corresponding locking transaction. This results in an incomplete protocol state in which $u_i$’s funds may remain locked until the timelock expires. 
    \item \textbf{Passive Offline.} In any two-party channel $\Omega(u_i, u_j)$, a passive offline event occurs when both parties have completed the locking operations, but after $u_i$ performs the unlock operation, the counterparty $u_j$ becomes offline due to an unexpected failure and is unable to perform its own unlock operation. In such a state, $u_j$ may suffer financial loss unless the protocol provides adequate fault-tolerance.  
\end{itemize} 
%
% Security against offline issues requires that the protocol provide two guarantees. First, when an adversarial participant initiates an active offline attack by intentionally going offline, the funds belonging to honest participants must remain safe and cannot become unusable. Second, participants who become offline passively due to benign network or system failures must be able to recover their funds without loss.  
To ensure the security of cross-chain channel protocols, it is necessary to explicitly account for the impact of offline behavior during protocol execution. Since node unavailability may affect both protocol completion and the safety of locked funds, we formalize the corresponding security requirements below.   

\begin{definition}[Security Against Offline Issues] 
A protocol $\Pi$ is secure against offline issues if, for any PPT adversary $\mathcal{A}$ and any execution of $\Pi$, the following holds except with negligible probability in the security parameter: 
\begin{enumerate} 
    \item If the adversary forces a party offline after one or both sides have completed their locking operations, the honest party must still retain the ability to access and use its locked funds without having to wait until the prescribed timeout.  
    \item If an honest party completes the unlock procedure while its counterparty is offline, then upon returning online the offline party must not incur any financial loss as a result of having been offline. 
\end{enumerate}
\end{definition}

\begin{itemize}
    \item \textbf{Privacy.} Privacy in cross-chain systems is commonly captured by unlinkability,
    which requires that an adversary cannot determine whether a given sender is linked to its intended receiver from observable protocol executions~\cite{malavolta2017concurrency,ge2023accio,guo2024zkCross}. 
    Unlinkability is formalized through the following  game. 
\end{itemize}

\begin{definition}[Unlinkability Game] \label{Def:unlinkability}
Let $\lambda$ be the security parameter. The unlinkability game $\mathsf{Game}^{\mathsf{Unlink}}_{\mathcal{A}}(\lambda)$ is defined as follows:
\begin{enumerate}
    \item The challenger initializes the system and selects two distinct sender-receiver pairs $(S_0, R_0)$ and $(S_1, R_1)$.
    \item A random bit $b \in \{0,1\}$ is chosen, and the challenger simulates a payment from $S_b$ to $R_b$ via a multi-hop path, producing the on-chain and observable outputs $\mathcal{O}$.
    \item Following the threat model, the adversary $\mathcal{A}$, having control over an arbitrary subset of intermediate nodes and observing $\mathcal{O}$, outputs a guess $b'$.
\end{enumerate}
The advantage of $\mathcal{A}$ is defined as:
\[
\mathsf{Adv}^{\mathsf{Unlink}}_{\mathcal{A}}(\lambda) = \left| \Pr[b' = b] - \frac{1}{2} \right|. 
\]

We say that $\mathsf{CCN}$ satisfies \emph{unlinkability} if for all PPT adversaries $\mathcal{A}$, the advantage $\mathsf{Adv}^{\mathsf{Unlink}}_{\mathcal{A}}(\lambda)$ is negligible in $\lambda$. 

\end{definition}

Beyond security and privacy, a cross-chain protocol must guarantee atomic settlement
to ensure the correctness of the final outcome~\cite{guo2024zkCross}. 
Without atomicity, attacks such as double-spending or selective abortion may lead to inconsistent states across blockchains and compromise the correctness of cross-chain settlement.  
\begin{definition}[Cross-Chain Atomicity]
\label{Def:atomicity}
A cross-chain scheme satisfies \emph{atomicity} if, for any execution involving multiple blockchains, either all honest participants obtain their agreed outputs, or all honest participants can reclaim their locked assets. 
\end{definition}

In the rest of this section, we will introduce the main three key technologies that we adopt to achieve these design goals.  

\subsection{HTLC} ~\label{subsec:HTLC} 
HTLC initially developed within the Lightning Network, has been adapted for cross-chain applications to guarantee atomicity in cross-chain interactions~\cite{xu2024exploring}.  
\begin{definition}[HTLC in Cross-chain Scenarios]\label{Def:HTLC}
  The entire process involves two blockchains $\alpha$ and $\beta$, where both interacting parties Alice and Bob have nodes on both chains. The protocol comprises three steps: $\mathsf{Lock}$, $\mathsf{Unlock}$, and $\mathsf{Refund}$.  
\begin{itemize}
        \item  $\mathsf{Lock}$. Alice generates a preimage $s$, computes its hash result $h(s)$, and sends $h(s)$ to Bob. In chain $\alpha$, Alice employs $h(s)$ and a time lock $T$ to lock the funds sent to Bob. Similarly, Bob uses the same method in chain $\beta$ to lock the funds sent to Alice. 
        \item  $\mathsf{Unlock}$. In chain $\beta$, Alice opens $s$ to unlock the funds. Then, Bob learns $s$ and unlocks the funds in chain $\alpha$.  
        \item $\mathsf{Refund}$. If Alice fails to provide $s$ before timeout, the locked funds in chain $\beta$ would be refunded to Bob, and the locked funds in chain $\alpha$ would be refunded to Alice. 
\end{itemize}
\end{definition}

%In chain $\alpha$, the timeout specified by the time lock is longer than that in chain $\beta$. 
Note that, the interval between these two time locks ensures that after Alice completes the unlock process in chain $\beta$, Bob has sufficient time to complete the unlock process in chain $\alpha$. 

\subsection{Off-Chain Channels} ~\label{subsec:channel}
The off-chain channel serves as a scalability solution for blockchains. 
\begin{definition}[Off-chain Channel]
   The whole process of off-chain channels $\Theta$ can be can be divided into three steps: $\mathsf{Open}$, $\mathsf{Interact}$, $\mathsf{Close}$.  
    \begin{itemize}
        \item  $\mathsf{Open}$.  Both parties open a channel by depositing funds into the smart contract, establishing the initial state of the channel. 

        \item  $\mathsf{Interact}$. Both parties interact within the channel using receipts in an off-chain manner. Each receipt includes the sender's signature, the receiver's address, and the transfer funds. Both parties need to execute each receipt to update the channel state, and the updated state requires signatures from both. Note that, every state change corresponds to a sequence number, which records the order of each state change. 

        \item  $\mathsf{Close}$. After completing their interactions, one party (agreed by both parties) uploads the final state to close the channel. If the uploading party acts maliciously by providing a biased state, the other party can appeal within a set time frame with the correct one. The smart contract would authenticate the final state by verifying the signatures and sequence numbers provided by both parties. 
    \end{itemize}
\end{definition}
%
% Emerging schemes such as state channels~\cite{dziembowski2018general}, cross-chain channels~\cite{guo2023cross}, and privacy-preserving off-chain channels~\cite{yu2022zk,tairi20212,ge2023accio} aim to enhance the universality, security, and privacy of off-chain channels. 
The core design of off-chain channels allows multiple transactions to be conducted off the blockchain, thereby alleviating on-chain congestion.

\subsection{Zero-knowledge Proof: zk-SNARK} ~\label{subsec: zk-SNARK}
zk-SNARK is one type of zero-knowledge proof widely utilized in blockchain applications~\cite{sun2021survey}. %
%The prover generates a proof that convinces the verifier that they know a valid assignment of values to the input wires without revealing those values. 
%It facilitates a party, known as the prover, in proving the truth of a statement (typically represented by a circuit) to a verifier. 
\begin{definition}[zk-SNARK]
   The whole process of zk-SNARK can be represented by a tuple of polynomial-time algorithms $\Pi$ $\overset{\text{def}}{=}$ ($\mathsf{Setup}$, $\mathsf{Prove}$, $\mathsf{Verify}$): 
		\begin{itemize}
			\item  $(\widetilde{pk}, \widetilde{vk}) \leftarrow \mathtt{Setup}(1^{\lambda}, \Lambda)$. Taking a security parameter $1^{\lambda}$ and a circuit $\Lambda$ as inputs, the algorithm obtains the key pair $(\widetilde{pk}, \widetilde{vk})$, where $\widetilde{pk}$ is the proving key for proof generation, and $\widetilde{vk}$ is the verification key for proof validation. 
			\item $\pi \leftarrow \mathtt{Prove}(\widetilde{pk}, \vec{x}, \vec{w})  $.
            The algorithm takes the proving key $\widetilde{pk}$, the public inputs $\vec{x}$, and the private inputs $\vec{w}$ as inputs to generate a proof $\pi$. 
			\item $ 1/0   \leftarrow  \mathtt{Verify}( \widetilde{vk}, \vec{x}, \pi)$.
			The algorithm validates the proof $\pi$ using the verification key  $\widetilde{vk}$ and public inputs $\vec{x}$, returning a result of 1 for successful verification and 0 otherwise. 
		\end{itemize}
  \end{definition}
%
%zk-SNARK facilitates a party, known as the prover, in proving the truth of a statement (typically represented by a circuit) to a verifier. %At the end of the protocol, private inputs can remain hidden from the verifier. 
zk-SNARK satisfies the properties of soundness, completeness, and zero knowledge.  

In the following, we will introduce how to design our scheme based on the above technologies. 

% \begin{CJK}{UTF8}{gbsn}
% \textcolor{red}{TODO: 合并4-5-6章节、画一个通用的图+一个整体的overview}
% \end{CJK}

\section{CCN: Cross-Chain Channel Networks}~\label{sec:CCN}
% 
%In this section, we first provide an overview on $\mathsf{CCN}$. 
%Then, we provide the definition of $\mathsf{R\text{-}HTLC}$, which is the basic technical support for $\mathsf{CCN}$. Finally, we explain the details of our scheme. 
%In this section, we provide an overview on $\mathsf{CCN}$ and explain its details. 
%In this section, we introduce $\mathsf{CCN}$, which is a novel network supporting multi-hop cross-chain interactions. 
%
%\subsection{Overview}~\label{subsec:Overview}
%
In this section, we first provide an overview of $\mathsf{CCN}$. Then, we introduce our newly proposed cross-chain settlement protocol $\mathsf{R\text{-}HTLC}$, and explain how it enables $\mathsf{CCN}$ to support cross-chain interactions. 
\subsection{Overview}
\begin{figure}[ht]
    \centering
    \includegraphics[width=0.47\textwidth]{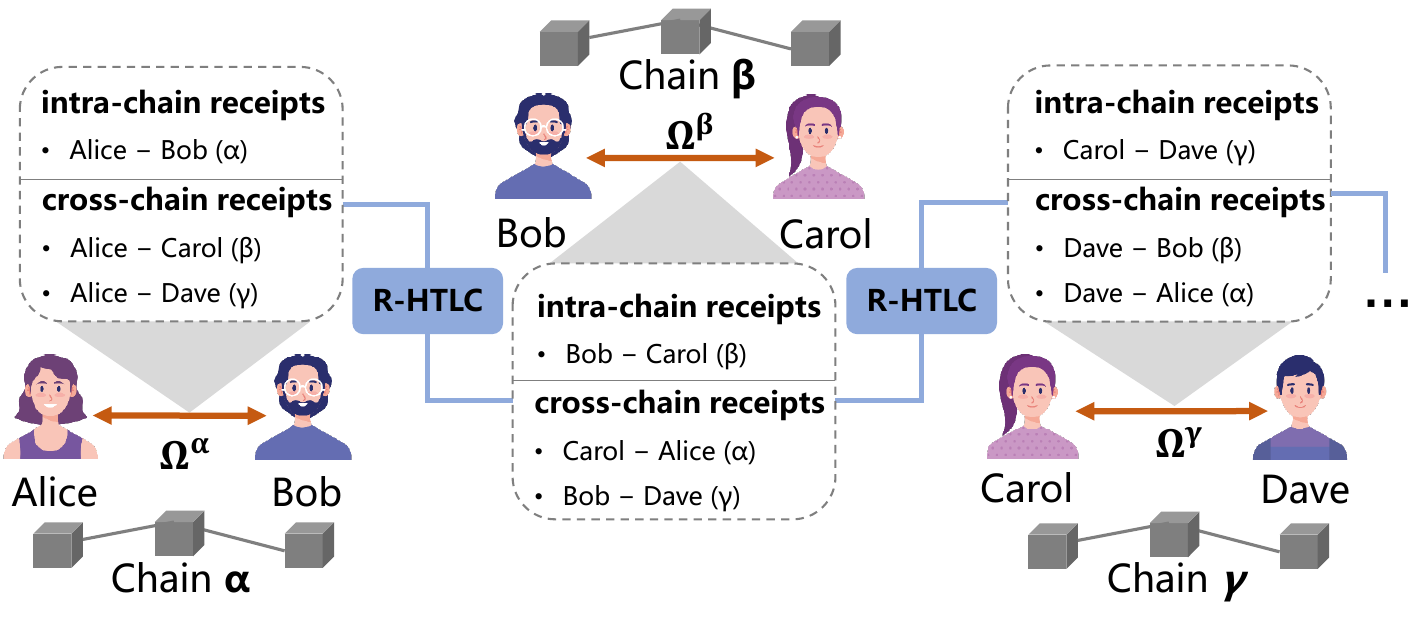}
	\caption{This illustration provides an overview of the cross-chain channel network $\mathsf{CCN}$. The orange arrows represent off-chain channels, and cross-chain settlement between different blockchains is facilitated via $\mathsf{R\text{-}HTLC}$. 
    } 
    \label{Fig:overview}
\end{figure}
$\mathsf{CCN}$ is designed to establish a cross-chain channel network by connecting multiple cross-chain channels, where each cross-chain channel consists of a pair of off-chain channels anchored on different blockchains. In this network, nodes can perform cross-chain interactions by routing through intermediate nodes. For example, as illustrated in Fig.~\ref{Fig:overview}, Alice and Bob interact via an off-chain channel on blockchain $\alpha$, Bob and Carol communicate on blockchain $\beta$, and Carol and Dave are connected via blockchain $\gamma$. This structure naturally extends to include other blockchains and participants, forming a decentralized cross-chain channel network. 

In $\mathsf{CCN}$, the interacting parties exchange signed receipts to serve as off-chain evidence of the interaction (introduced in Sec.~\ref{subsec:channel}). We further distinguish receipts as intra-chain receipts and cross-chain receipts to indicate the different purposes of the funds. 
Intra-chain receipts are used when both parties are anchored on the same blockchain, whereas cross-chain receipts are used to bridge interactions across different blockchains.  
To ensure secure and atomic settlement across chains, $\mathsf{CCN}$ introduces the protocol $\mathsf{R\text{-}HTLC}$, which guarantees the correctness of settlement between different channels. $\mathsf{R\text{-}HTLC}$ is designed to address both active and passive offline issues. In addition to $\mathsf{R\text{-}HTLC}$, $\mathsf{CCN}$ employs ZKPs to preserve unlinkability by preventing the leakage of information that may otherwise reveal transaction linkages. 

We present the detailed design of $\mathsf{R\text{-}HTLC}$ and $\mathsf{CCN}$ in the following subsections. 

\subsection{R-HTLC: Revisiting HTLC in Light of Security Concerns}~\label{sec:R-HTLC} 
%In this section, we first introduce the details of $\mathsf{R\text{-}HTLC}$. Then, we present how $\mathsf{CCN}$ utilizes $\mathsf{R\text{-}HTLC}$ and $\mathsf{CCN}$ to enable multi-hop cross-chain interactions. 
%
In this subsection, we introduce $\mathsf{R\text{-}HTLC}$, and provide its definition in the following:  
\begin{definition}
    $\mathsf{R\text{-}HTLC}$ is a cross-chain settlement protocol, which can be represented by a tuple of polynomial-time algorithms $\mathsf{R\text{-}HTLC}$ $\overset{\text{def}}{=}$ ($\mathsf{Prepare}$, $\mathsf{Lock}$, $\mathsf{Unlock}$, $\mathsf{Refund}$).  
\end{definition}

$\mathsf{R\text{-}HTLC}$ enhances the security of HTLC and has three key technological improvements.  
First, in $\mathsf{R\text{-}HTLC}.\mathsf{Prepare}$, we adopt zk-SNARK to generate multiple independent hash locks suitable for multi-hop cross-chain scenarios, thereby addressing the privacy risks associated with using the same hash locks. 
Second, we propose an hourglass mechanism in $\mathsf{R\text{-}HTLC}.\mathsf{Lock}$ to solve the active offline issue, where nodes can spend a portion of the locked funds. 
Finally, in $\mathsf{R\text{-}HTLC}.\mathsf{Refund}$, we design a multi-path refund mechanism that handles different settlement outcomes, including normal completion as well as refund scenarios arising from offline events. 
%
%
%
%In this subsection, we provide additional details about the cross-chain channel network, including its definition, workflow, methods for supporting fund transfers and exchanges, and its extension to multi-hop cross-chain interaction scenarios. 
%
%
%We propose $\mathsf{R\text{-}HTLC}$ to enhance the security of $\mathsf{CCN}$ while preserving privacy. The technical core of $\mathsf{R\text{-}HTLC}$ includes a hash-lock obfuscation design based on zk-SNARK, a novel hourglass mechanism, and an innovative multi-path refund strategy. 
%
%In this subsection, we first give the definition of cross-chain channel networks. 
%
% \begin{definition}[{\bf CCN}]
% $\mathsf{CCN}$ is a decentralized ecosystem designed to facilitate efficient, low-cost, privacy-preserving, and secure cross-chain interactions across diverse blockchain networks. %
% \end{definition}

% \begin{figure}[htb]
% 	\centering
% 	\includegraphics[width=0.4\textwidth]{figures/R-HTLC.pdf}
% 	\caption{An illustrative example demonstrates that 
%  Alice initiates a transfer to Carol.} \label{Fig:R-HTLC-1}
% \end{figure}
%

% To facilitate a comprehensive understanding of the protocol, we present an example to illustrate the details of $\mathsf{R\text{-}HTLC}$.  
% %
% %
% In this example, 
We consider the following scenario: Alice intends to initiate a cross-chain transfer of 30 units to Carol, and a cross-chain channel from Alice to Carol is formed through an intermediary node Bob. Note that in real-life scenarios, different chains have publicly known exchange rates. Therefore, we assume an exchange rate of 1:2, meaning that if Alice transfers 30 in Chain~$\alpha$ to Bob, then Bob would need to transfer 60 units to Carol in Chain~$\beta$. 
Each account has a key pair ($pk$, $vk$) to perform on-chain operations, such as sending transactions and verifying the validity of transaction signatures. 

\subsubsection{The Prepare Phase Based on zk-SNARKs} 
This phase focuses on generating the circuit required for the locking phase. To support later phases aimed at addressing offline issues, $\mathsf{R\text{-}HTLC}$ introduces three hash locks. In current HTLC-based schemes, a single hash lock is typically reused across multiple hops to enable sequential unlocking and ensure atomicity. However, this design can compromise unlinkability, as identical hash values can be observed across transactions. Therefore, the main challenge is to conceal these hash locks while still enabling honest participants to perform the sequential unlocking process correctly. %

To generate multiple independent hash locks, we construct two new circuits in Fig.~\ref{Fig:prepare-circuit} and employ zk-SNARK to protect the private inputs. 
First, Alice generates two circuits, i.e., $\Lambda^{\mathsf{Pre.}}_{CB}$ and $\Lambda^{\mathsf{Pre.}}_{CA}$. 
\begin{figure}[htb]
	\centering
	\includegraphics[width=0.4\textwidth]{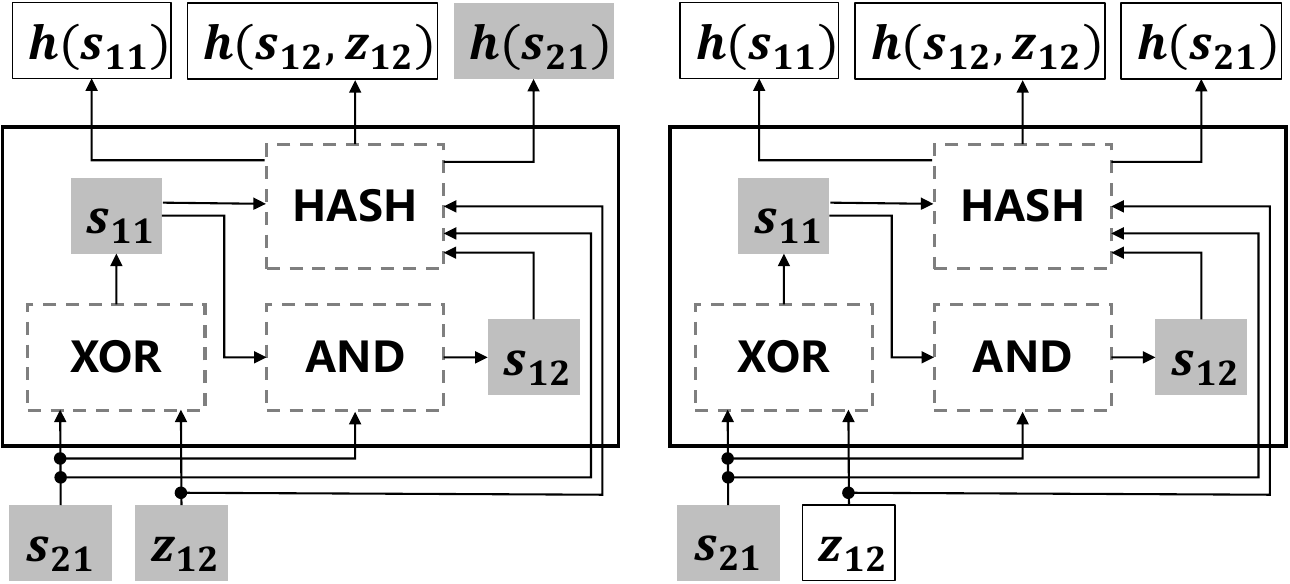}
	\caption{ The logic diagram of the circuits in $\mathsf{R\text{-}HTLC.Prepare}$ -- the circuit on the left is denoted as $\Lambda^{\mathsf{Pre.}}_{CA}$, while the one on the right is denoted as $\Lambda^{\mathsf{Pre.}}_{CB}$.} 
\label{Fig:prepare-circuit}
\end{figure}
Both of these circuits incorporate three functions (indicated by dashed lines): $\mathsf{AND}$ (Logical AND), $\mathsf{HASH}$ (SHA-256), and $\mathsf{XOR}$ (eXclusive OR).  
They aim to generate three hash locks ($h(s_{11}),$ $h(s_{12},$ $z_{12}),$ $h(s_{21})$) and establish the operational relationships among three 256-bit preimages ($s_{11},s_{12},s_{21}$) with a 256-bit integer ($z_{12}$). 
The specific constraints are summarized as follows\footnote{All parameters with a gray background are private ones protected with zk-SNARK.}: 
%enforcing the following constraints: 
\begin{itemize} %\small                         
\item [--] $\Lambda^{\mathsf{Pre.}}_{CA}$:  \colorbox{gray!50}{$s_{11}$} = \colorbox{gray!50}{$s_{21}$} $\oplus$ \colorbox{gray!50}{$z_{12}$}; \colorbox{gray!50}{$s_{12}$} = \colorbox{gray!50}{$s_{11}$} $\land$ \colorbox{gray!50}{$s_{21}$}; 
\item [--] $\Lambda^{\mathsf{Pre.}}_{CA}$: $h(s_{11})$ = SHA-256(\colorbox{gray!50}{$s_{11}$}); $h(s_{12},z_{12})$ = SHA-256(\colorbox{gray!50}{$s_{12},z_{12}$}); $h(s_{21})$ = SHA-256(\colorbox{gray!50}{$s_{21}$}). 
\item [--] $\Lambda^{\mathsf{Pre.}}_{CB}$: \colorbox{gray!50}{$s_{11}$} = \colorbox{gray!50}{$s_{21}$} $\oplus$ $z_{12}$; \colorbox{gray!50}{$s_{12}$} = \colorbox{gray!50}{$s_{11}$} $\land$ \colorbox{gray!50}{$s_{21}$}; 
\item [--] $\Lambda^{\mathsf{Pre.}}_{CB}$: $h(s_{11})$ = SHA-256(\colorbox{gray!50}{$s_{11}$}); $h(s_{12},z_{12})$ = SHA-256(\colorbox{gray!50}{$s_{12}$}, $z_{12}$); \colorbox{gray!50}{$h(s_{21})$} = SHA-256(\colorbox{gray!50}{$s_{21}$}). 
\end{itemize}
%
% The difference between $\Lambda^{\mathsf{Pre.}}_{CA}$ and $\Lambda^{\mathsf{Pre.}}_{CB}$ is that in $\Lambda^{\mathsf{Pre.}}_{CA}$, $z_{12}$ and $h(s_{21})$ are private inputs, while in $\Lambda^{\mathsf{Pre.}}_{CB}$, they are public inputs. 
%
%This design differs in that the node of Alice only needs to be in Chain~$\alpha$ and does not need to execute the subsequent unlocking process, whereas Bob’s nodes are involved in both chains and therefore need to obtain the hash locks used in both chains. 
%The above constraints ensure that Bob must first obtain $z_{12}$ and then wait for $s_{21}$ to be revealed to compute $s_{11}$ and complete the unlocking operations. 
%
Note that, each circuit, including those introduced later, must undergo an initialization process to facilitate the generation of ZKPs, and we use circuit $\Lambda^{\mathsf{Pre.}}_{CB}$ as an example to illustrate this process. 
To be specific, Carol first generates a security parameter and employs $\Pi.\mathsf{Setup}$ to derive the proving key and the verification key ($\widetilde{pk}^{\mathsf{Pre.}}_{CB}$, $\widetilde{vk}^{\mathsf{Pre.}}_{CB}$). 
%Then, Carol makes the circuit, proving key, and verification key public, which enables others to subsequently verify the correctness of the generated proof. 
Then, Carol adopts the algorithm $\Pi.\mathsf{Prove}$ to generate a proof $\pi^{\mathsf{Pre.}}_{CB}$, with the proving key $\widetilde{pk}^{\mathsf{Pre.}}_{CB}$, the public inputs ($h(s_{11}),h(s_{12},z_{12}),h(s_{21}),z_{12}$), and the private inputs ($s_{11},s_{12},s_{21}$).   
The other circuits follow the same steps as described above, and we would omit this process in subsequent descriptions. So, for the circuit $\Lambda^{\mathsf{Pre.}}_{CA}$, Carol would generate a proof $\pi^{\mathsf{Pre.}}_{CA}$. 
Finally, through off-chain communications, Carol sends $\pi^{\mathsf{Pre.}}_{CB}$ to Bob and $\pi^{\mathsf{Pre.}}_{CA}$ to Alice, along with their respective public inputs. Bob and Alice use $\Pi.\mathsf{Verify}$ to verify the correctness of the received proof. 

\subsubsection{Locking with an Hourglass Mechanism}
To address the issue of fund unavailability caused by active offline behavior, we design an hourglass mechanism that allows honest parties to access and use their locked funds without waiting for the timeout.  

In this phase, one first locks the funds on the smart contract and then effectively spends these funds based on an hourglass mechanism. 
Specifically, Alice sends a transaction $\mathsf{Tx}_{\text{Lock}}$ to the smart contract $\xi^\alpha$, which aims to lock the fund $v_\mathsf{{Lock}}$ (i.e., 30 uints) based on hash locks $h(s_{11})$ and $h(s_{12},z_{12})$. Then, $\xi^\alpha$ sets a time lock $T_1$. Bob can provide either $s_{11}$ or ($s_{12}$, $z_{12}$) within $T_1$ to unlock $v_\mathsf{{Lock}}$. 
After that, $\xi^\alpha$ initiates an hourglass mechanism.
The hourglass mechanism segregates the funds locked in the smart contract into frozen funds and available funds. The frozen funds originate from $\mathsf{Tx}_{\text{Lock}}$ sent by Alice, with available funds initially set to 0. 
As time goes by, frozen funds gradually transform into available funds, allowing the nodes who locked the funds to generate multiple commitments equal to or less than the available funds for payment purposes. 
For example, 
when the available funds exceed 0, Alice can choose to generate a commitment $\mathsf{cmt}$ similar to checks in banks. 
Let $\mathsf{cmt} \overset{\text{def}}{=} (\mathsf{snd}, \mathsf{rcv}, v_\mathsf{cmt})_{\mathsf{\sigma_{\xi}}}$, meaning that a sender $\mathsf{snd}$ transfers $v_\mathsf{cmt}$ to a receiver $\mathsf{rcv}$, where $\mathsf{\sigma_{\xi}}$ is the signature signed by the smart contract $\xi$. 
Alice sends ($\mathsf{rcv}, v_\mathsf{cmt}$) based on $\mathsf{Tx}_{\text{Check}}$ to $\xi^\alpha$. $\xi^\alpha$ returns $\mathsf{cmt}$ to Alice when it verifies that the $v_\mathsf{cmt}$ is less than or equal to the available funds. Then, Alice sends this $\mathsf{cmt}$ to $\mathsf{rcv}$. 
$\mathsf{rcv}$ can get the corresponding funds during $\mathsf{R\text{-}HTLC.Refund}$. 
Note that the rate at which frozen funds convert to available funds is determined by the interacting parties. 
If both parties trust each other, they can set it to 0. 

After Alice completes the locking process, Bob in Chain $\beta$ sends $\mathsf{Tx}_{\text{Lock}}$ to the smart contract $\xi^\beta$ to lock the funds intended for Carol using the $h(s_{21})$. 
However, due to potential delays between the time Alice and Bob submit $\mathsf{Tx}_{\text{Lock}}$, and under the effect of the hourglass mechanism, the amount of funds transferred from Alice to Bob may change. For instance, the originally frozen amount of 30 uints on Chain~$\alpha$ may be adjusted to 28 uints. 
As a result, Bob only needs to lock 56 uints on Chain~$\beta$ when performing the corresponding lock operation. 
Then, $\xi^\beta$ sets a time lock $T_2$ (less than $T_1$) and initiates the hourglass mechanism. Using this approach, if either party goes offline during this phase, it would not result in the other party's funds being unavailable for an extended period. 
%
% \begin{equation}\notag
% 	\mathsf{Tx}_{\text{Lock-\underline{$\alpha$}/\underline{$\beta$}}} \overset{\text{def}}{=} (\mathsf{From:snd}; \mathsf{To:\xi}; v_\mathsf{{Lock}},  \underline{h(s_1), h(s_3)}/ \underline{h(s_2)}). 
% \end{equation}

\subsubsection{The Unlock Phase in R-HTLC}
%\noindent{$\blacksquare$ $\mathsf{R\text{-}HTLC.Share}$}. 
%
Similar to the consideration of privacy in the prepare phase, HTLC-based schemes require revealing the same secret during the unlocking process, which can compromise unlinkability. To address this, we design dedicated unlocking circuits and adopt ZKPs to hide the secret information. 
The unlocking process is divided into two steps. First, both parties within the same off-chain channel confirm the fund to be unlocked. Second, we design two circuits and adopt zk-SNARK to unlock the confirmed fund. 
Due to the effects of the hourglass mechanism, the frozen funds would decrease. Let's assume that the frozen funds are 27 in Chain~$\alpha$ and 54 in Chain~$\beta$. 
The details of the first step are as follows: 

First, 
Bob generates a hash digest $h(h(s_{11}), 27)$, where $h(s_{11})$ is a hash lock in Chain~$\alpha$ and 27 is the fund Bob should withdraw in Chain $\alpha$. 
Second, Bob signs $h(h(s_{11}), 27)$ (marked as $\sigma^{\alpha}_{BA}$) using his private key $vk^\alpha_B$. 
Then, Bob sends $\sigma^{\alpha}_{BA}$ along with $h(h(s_{11}), 27)$, $h(s_{11})$, and 27 to Alice. 
Finally, Alice validates $\sigma^{\alpha}_{BA}$, signs $h(h(s_{11}), 27)$ (marked as $\sigma^{\alpha}_{AB}$), and sends $\sigma^{\alpha}_{AB}$ back to Bob. 
Repeating the above process, Bob and Carol can also confirm the final fund of 54. 
The above process occurs off-chain and ensures that participants within the same channel, such as Alice and Bob, confirm the final fund and exchange signatures with each other regarding the agreed-upon fund. 
% %
\begin{figure}[htb]
	\centering
	\includegraphics[width=0.4\textwidth]{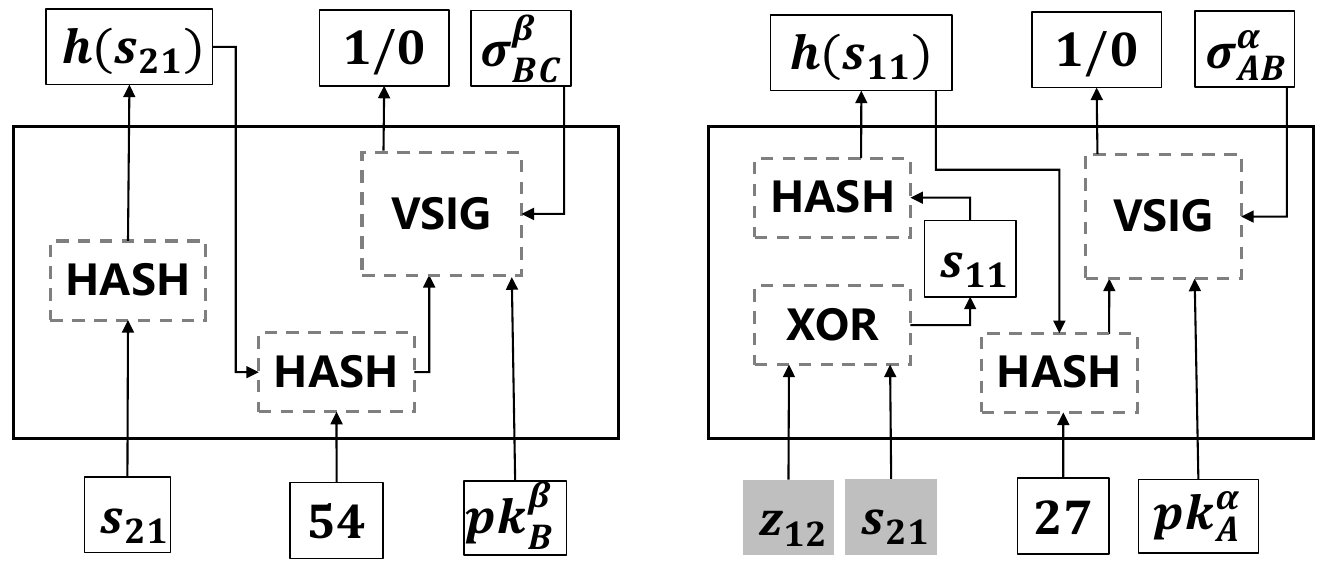}
	\caption{ 
     %The circuit on the left is denoted as $\Lambda^{\mathsf{Sha.}}_{CP}$ (used in $\mathsf{R\text{-}HTLC.Share}$), the circuits in the middle and on the right (used in $\mathsf{R\text{-}HTLC.Unlock}$) are denoted as $\Lambda^{\mathsf{Unl.}}_{C\xi}$ and $\Lambda^{\mathsf{Unl.}}_{B\xi}$, respectively.
     The logic diagram of the circuits in $\mathsf{R\text{-}HTLC.Unlock}$ -- the circuit on the left is denoted as $\Lambda^{\mathsf{Unl.}}_{C\xi}$, while the one on the right is denoted as $\Lambda^{\mathsf{Unl.}}_{B\xi}$. 
 } 
    \label{Fig:unlock-circuit}
\end{figure}
% %
% Based on this circuit, Carol 
% generates the proof $\pi^{\mathsf{Sha.}}_{CP}$. 
% %
% Then, Carol signs $h(s_{21}, 54, 27)$ (marked as $\sigma^{\beta}_{CP}$) using her account's private key $vk^\beta_C$, then sends $\sigma^{\beta}_{CP}$, $\pi^{\mathsf{Sha.}}_{CP}$ with the public parameters ($h(s_{21}, 54, 27)$, $h(s_{21})$, 54, and 27) 
% to other participants through off-chain communication. 
% This step is to prove the correctness of the fund to others without revealing $s_{21}$.  
% %
% After Alice and Bob verify the correctness of $\sigma^{\beta}_{CP}$  and $\pi^{\mathsf{Sha.}}_{CP}$, they sign $h(s_{21}, 54, 27)$ with their private keys in Chain $\alpha$, respectively (denoted as $\sigma^{\alpha}_{AB}$ and $\sigma^{\alpha}_{BA}$). Alice sends $\sigma^{\alpha}_{AB}$ to Bob, and Bob returns $\sigma^{\alpha}_{BA}$ to Alice. After that, Bob signs $h(s_{21}, 54, 27)$ with the private key from his account on $\beta$ (denoted as $\sigma^{\alpha}_{BC}$) and sends $\sigma^{\beta}_{BC}$ to Carol. 
%
%The above process ensures that all participants confirm the final fund. Nodes within the same off-chain channel can obtain signatures from each other regarding the settlement result. 
%
%
%\subsection{The Unlock Phase in R-HTLC}
%\noindent{$\blacksquare$ $\mathsf{R\text{-}HTLC.Unlock}$}. 
%
%
In the second step, Carol constructs a new circuit $\Lambda^{\mathsf{Unl.}}_{C\xi}$ shown in Fig.~\ref{Fig:unlock-circuit}. 
The entire circuit includes two types of functions (represented by dashed boxes): $\mathsf{HASH}$ and $\mathsf{VSIG}$, where $\mathsf{VSIG}$ implements the verification of a given signature. The constraints of $\Lambda^{\mathsf{Unl.}}_{C\xi}$ are as follows: 
\begin{itemize} %\small
\item [--] $\Lambda^{\mathsf{Unl.}}_{C\xi}$: $h(s_{21})$ = SHA-256($s_{21}$); $h(h(s_{21}), 54)$ = SHA-256($h(s_{21})$, 54); 
\item [--] $\Lambda^{\mathsf{Unl.}}_{C\xi}$: 1/0 = VSIG($\sigma^{\beta}_{BC}$, $pk^\beta_B$, $h(h(s_{21}), 54)$). 
\end{itemize}
Based on this circuit, Carol gets the proof $\pi^{\mathsf{Unl.}}_{C\xi}$. 
Then, she packages $\pi^{\mathsf{Unl.}}_{C\xi}$ with public parameters 
($h(s_{21})$, $s_{21}$, 54, $\sigma^{\beta}_{BC}$, $pk^\beta_B$) into the transaction $\mathsf{Tx_{\text{Unlock}}}$, and sends it to the smart contract $\xi^\beta$ to acquire the assets of equal value to 54. 
$\xi^\beta$ verifies the validity of $\pi^{\mathsf{Unl.}}_{C\xi}$ through algorithm $\Pi.\mathsf{Verify}$ and transfers 54 uints to Carol. 
After Carol successfully unlocks, Bob learns $s_{21}$ and generates a new circuit $\Lambda^{\mathsf{Unl.}}_{B\xi}$. Compared to $\Lambda^{\mathsf{Unl.}}_{C\xi}$, $\Lambda^{\mathsf{Unl.}}_{B\xi}$ adds an $\mathsf{XOR}$ function. We summarize the constraints of $\Lambda^{\mathsf{Unl.}}_{B\xi}$ as follows: 
\begin{itemize} %\small
\item [--] $\Lambda^{\mathsf{Unl.}}_{B\xi}$: $s_{11}$ = \colorbox{gray!50}{$s_{21}$} $\oplus$ \colorbox{gray!50}{$z_{12}$}; 
\item [--] $\Lambda^{\mathsf{Unl.}}_{B\xi}$: $h(s_{11})$ = SHA-256($s_{11}$); $h(h(s_{11}), 27)$ = SHA-256($h(s_{11})$, 27); 
\item [--] $\Lambda^{\mathsf{Unl.}}_{B\xi}$: 1/0 = VSIG($\sigma^{\alpha}_{AB}$, $pk^\alpha_A$, $h(h(s_{11}), 27)$). 
\end{itemize}
Following a process similar to Carol's unlocking procedure, Bob generates a proof $\pi^{\mathsf{Unl.}}_{B\xi}$ 
and sends $\mathsf{Tx}_{\text{Unlock}}$ to the smart contract $\xi^\alpha$. After verifying the correctness of $\pi^{\mathsf{Unl.}}_{B\xi}$, $\xi^\alpha$ transfers 27 uints to Bob. 
%

%Note that, if the unlock operation is not successfully executed, all frozen funds would gradually convert to available funds. Next, we introduce a multi-path refund strategy to complete settlements under different cases. 
%
Based on the above design, the unlocking process can be successfully completed without compromising unlinkability. However, this process may still be influenced by node offline events, which can lead to situations where one party completes the unlock operation while the other, being offline, is unable to do so, thereby resulting in potential fund loss. To ensure atomic settlement under such conditions, we introduce a multi-path refund mechanism.

\subsubsection{Refunding with a Multi-Path  Strategy}
We propose a multi-path refund strategy to ensure correct final settlement under various cases. Unlike HTLC-based schemes that simply return all locked funds in the refund phase, the refund process in $\mathsf{R\text{-}HTLC}$ must additionally account for factors such as node offline status, the use of commitments, and the influence of the hourglass mechanism. 
We list three cases in this phase and illustrate how the proposed multi-path refund strategy effectively handles each of them. 

In Case 1, Carol and Bob complete the unlock operation, and subsequently, Alice and Bob perform the refund operation due to a timeout. This case illustrates a successful cross-chain interaction, wherein the available funds in the smart contract are less than the initially locked fund. In Case 2, Carol and Bob do not execute the unlock operation, leading to Alice and Bob initiating the refund operation. Case 2 presents a scenario susceptible to active offline issues. In this case, the available funds equal the initially locked fund. In Case 3, Carol completes the unlock operation, but Bob fails to do so, resulting in a situation where Alice may initiate a malicious refund after the timeout. Specifically, Carol successfully unlocks and obtains the funds locked by Bob; however, after the timeout, since Bob did not complete his own unlock operation, Carol is able to refund the funds she originally locked. Case 3 illustrates a situation in which Bob faces passive offline issues. In this scenario, the available funds in Chain $\alpha$ match the initially locked fund, while in Chain $\beta$, the available funds are less than the initially locked fund. 

$\mathsf{R\text{-}HTLC}$ provides a multi-path refund strategy, ensuring the correctness of settlements for the three possible cases. 
In case 1, In Case 1, the participant initiates a refund through $\mathsf{Tx}_{\text{Refund}_1}$. The transaction $\mathsf{Tx}_{\text{Refund}_1}$ specifies the amount to be refunded. For example, Alice on Chain~$\alpha$ submits $\mathsf{Tx}_{\text{Refund}_1}$ to the smart contract. The contract first transfers the amount requested in $\mathsf{Tx}_{\text{Check}}$ to the corresponding node, and then transfers the remaining available funds back to Alice. 
In Case 2, 
Alice finalizes the settlement with ($s_{12},z_{12}$) by sending $\mathsf{Tx}_{\text{Refund}_2}$. When receiving $\mathsf{Tx}_{\text{Refund}_2}$, the smart contract not only validates the signature and fund but also computes the hash value of ($s_{12},z_{12}$) to determine if it matches the hash lock $h(s_{12},z_{12})$. 
Then, the smart contract sets a time lock $T_3$. 
After $T_3$, the refund process for the fund is the same as in Case 1. 

In case 3, Alice has two options: first, to send $\mathsf{Tx}_{\text{Refund}_3}$ to assist Bob in completing the settlement, or second, to act maliciously as if the unlock did not occur in Chain $\beta$ and send $\mathsf{Tx}_{\text{Refund}_2}$ to claim the funds meant for Bob. 
For the first option, Alice needs to generate a circuit similar to $\Lambda^{\mathsf{Unl.}}_{B\xi}$ (shown in Fig.~\ref{Fig:unlock-circuit}) and replace public inputs ($\sigma^{\alpha}_{AB}$, $pk^\alpha_A$) with ($\sigma^{\alpha}_{BA}$, $pk^\alpha_B$) when generating the proof. After that, Alice sends the proof with the public input based on $\mathsf{Tx}_{\text{Refund}_3}$ to the smart contract. 
The smart contract validates the correctness of the proof and allocates the funds designated for Bob's unlocking to him. Then, it settles the remaining fund in a manner consistent with the process outlined in Case 1. 
For the second option, Alice needs to reveal $s_{12}$ and $z_{12}$ (according to the design in Case 2). However, since Chain $\beta$ has completed the unlock operation, $s_{21}$ is also public. Therefore, miners have access to both $s_{12}$ and $s_{21}$, from which they can deduce $s_{11}$ and submit $\mathsf{Tx}_{\text{Appeal}}$ to the smart contract within $T_3$. 
The smart contract verifies whether $s_{11}$ corresponds to the hash lock $h(s_{11})$. If the verification is successful, the available fund (including funds belonging to Alice) is sent to the miner.  
%
% The operations on Chain~$\beta$ are the same. Specifically, Bob needs to send $\mathsf{Tx}_{\text{Refund}_1}$ to complete the refund. Upon receiving $\mathsf{Tx}_{\text{Refund}_1}$, the smart contract verifies whether the transaction's signature matches the one sending $\mathsf{Tx}_{\text{Lock}}$. Then, it transfers the fund requested in $\mathsf{Tx}_{\text{Check}}$ to the corresponding node, and the remaining available fund is transferred to Bob.  
%Next, we describe the different actions of Alice on Chain~$\alpha$ in each case. 
%
Based on the above design, the multi-path refund strategy can achieve settlement in different cases. Particularly, Case 3, only allows Alice to assist Bob in completing the refund, thereby resolving the passive offline issue. 

Note that, based solely on $\mathsf{R\text{-}HTLC}$, we still cannot achieve the design goal of the privacy, as adversaries can compromise unlinkability through the funds in transactions. Therefore, we introduce $\mathsf{CCN}$ in the next section to solve the above issue.  

\subsection{R-HTLC Enabled CCN}~\label{subsec:CCN} 
In this subsection, we present the detailed design of $\mathsf{CCN}$, a decentralized cross-chain network that leverages $\mathsf{R\text{-}HTLC}$ to enable secure, privacy-preserving, and multi-hop interactions. The protocol consists of two main components: off-chain interactions and on-chain settlements. 
To illustrate the workflow, we consider a multi-hop transaction from Alice to Emma across ($n$+1) blockchains. 

During the off-chain phase, participants interact through signed receipts. 
% $\mathsf{CCN}$ introduces two types of receipts: intra-chain receipts, used for local interactions within a single off-chain channel, and cross-chain receipts, used to transfer funds across multiple chains. 
% %
Intra-chain receipts are confined to a single channel and support flexible payment amounts. In contrast, cross-chain receipts require consistent values across channels, as practical payment scenarios assume that participants act rationally, meaning that no sender intends to overpay, nor would any receiver accept less than the agreed amount. 
For example, in Chain~$\alpha$, Alice sends 3 uints via intra-chain receipts and 4 uints via cross-chain receipts to Bob. In Chain~$\beta$, Bob may send 10 uints to Carol via intra-chain receipts and must forward 4 uints via cross-chain receipts to ensure consistency with the value received from Alice. 
% This process continues until Dave in Chain~$\gamma$ gets the same 4 uints via cross-chain receipts from Carol. 
%
% If Alice intends to initiate a cross-chain transfer of 4 uints to Dave, the amount she sends to Bob must cover both the principal and a negotiated bonus to incentivize forwarding. For simplicity, bonus mechanisms are not detailed here, as they do not affect the core functionality of the protocol. 
%
Finally, receipts are forwarded to the destination (Emma) and settled on-chain across all participating blockchains. %The proposed design preserves unlinkability under variable payment conditions, without the need for anonymous blockchains. 

% In this subsection, we introduce the details of $\mathsf{CCN}$. $\mathsf{CCN}$ is a decentralized cross-chain network, which adopts  $\mathsf{R\text{-}HTLC}$ to enable secure and privacy-preserving multi-hop interactions. The entire process involves both off-chain interactions and on-chain settlements. 
% Next, we explore the workings of $\mathsf{CCN}$ through an illustrative example of a multi-hop interaction, 
% exemplified by an interaction from Alice to Emma spanning across ($n$+1) blockchains. 
%
%
\begin{figure}[t]
	\centering
    \includegraphics[width=0.4\textwidth]{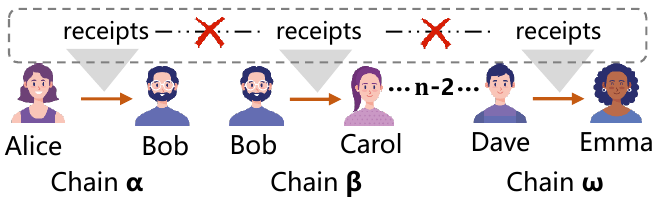}
	\caption{An example of a multi-hop cross-chain transfer. The orange arrows represent the off-chain channels  and the on-chain receipts are unlinkable. %One can see that the fund within each channel is independent. 
    } 
    \label{Fig:multi-hop}
\end{figure}

After completing the off-chain interaction, all participants proceed to execute $\mathsf{R\text{-}HTLC}$ to finalize the on-chain settlement. To support multi-hop applications, we provide a general representation of the circuit for generating hash locks, as illustrated in Fig.~\ref{Fig:prepare-circuit}. This design adheres to two key constraints: each hash lock must remain independent, and the unlocking on the $i$\textsuperscript{th} blockchain ($i \in [1, n]$) must be conditionally dependent on the successful unlocking on the $(i+1)$\textsuperscript{th} blockchain.
%
%After the off-chain interaction is completed, all participants execute $\mathsf{R\text{-}HTLC}$ to finalize the settlement.    
%We provide a more general representation of the circuit (as shown in Fig.~\ref{Fig:prepare-circuit}) for generating hash locks in the multi-hop applications. The general design has the following constraints: each hash lock must be independent, and it is essential to ensure that unlocking on the $i$\textsuperscript{th} chain ($i \in [1,n]$) is contingent upon successful unlocking on the $(i+1)$\textsuperscript{th} chain. 
We summarize the logic of this general circuit as follows: 
\begin{framed} \footnotesize
 \vspace{-0.2cm}
    \raggedright
    % \begin{multicols}{2}
    $\underline{{\bf {The~Logic~of~the~General~Circuit:}}}$ \\
    \hspace{0.5cm}$i = n$; \\
    \hspace{0.5cm} \bf{While}~{$i > 0$}: \\
    \hspace{0.8cm} \bf{If}~{$i = 1$}: \\
        \hspace{1.1cm} $s_{11} = s_{21} \oplus z_{12}$; \\
        \hspace{1.1cm} $h(s_{11}) = SHA\text{-}256(s_{11})$; \\
        \hspace{0.8cm} \bf{If}~{$i \geq 1$}: \\
        \hspace{1.1cm} $s_{i2} = s_{i1} \land s_{(i+1)1}$; \\
       \hspace{1.1cm} $h(s_{i2}, z_{i2}) = SHA\text{-}256(s_{i2}, z_{i2})$; \\
        \hspace{1.1cm} $h(s_{(i+1)1}) = SHA\text{-}256(s_{(i+1)1})$; \\
    \hspace{0.8cm} \bf{If}~{$i > 1$}: \\
        \hspace{1.1cm} $s_{(i+1)1} = s_{i1} \oplus z_{i2}$; \\
        \hspace{0.8cm} $i = i - 1$; 
    % \end{multicols}
    \vspace{-0.2cm}
\end{framed}
Following the protocol of $\mathsf{R\text{-}HTLC}$, Emma constructs this circuit based on the above logic and generates the corresponding ZKP for distribution to all involved participants. While the circuit logic is identical for each participant, the public inputs vary depending on their role within the cross-chain channel.  
These public inputs fall into three categories: hash locks, parameters required for completing the unlocking operation, and parameters for handling node offline scenarios. 
For example, Bob, as an intermediary node connected to both Chain~$\alpha$ and Chain~$\beta$, receives the hash locks for both chains as well as the parameters $z_{12}$, $s_{22}$, and $z_{22}$. Here, $z_{12}$ allows Bob to recover $s_{11}$ for completing the unlock, while $s_{22}$ and $z_{22}$ enable him to assist Carol in Chain~$\beta$ in the event of an offline issue. %All intermediary nodes, except for the last relay node (e.g., Dave), receive similar inputs. Dave does not need to address potential offline failures of the final recipient, Emma. If Emma is offline, the unlocking process cannot be completed, and the protocol falls back to $\mathsf{R\text{-}HTLC}.\mathsf{Refund}$ to ensure fairness in settlement. 
The initiators of the cross-chain interaction, such as Alice and Emma, are allowed to possess all relevant information, and we do not consider the possibility of them voluntarily disclosing their private data, as it is generally unrealistic to assume users would willingly reveal sensitive information in practical settings.

During $\mathsf{R\text{-}HTLC}.\mathsf{Lock}$, the locking process proceeds sequentially from Chain~$\alpha$ to Chain~$\omega$, where each locking action initiates an hourglass mechanism. The unlock and refund phases follow the same procedure as introduced in Sec.~\ref{sec:R-HTLC}. Specifically, in $\mathsf{R\text{-}HTLC.Unlock}$, both parties in the same chain exchange signatures to confirm the unlock result. For example, Alice and Bob in Chain~$\alpha$ exchange signatures before proceeding. Unlocking then proceeds in reverse order, from the $(i+1)$\textsuperscript{th} chain back to the $i$\textsuperscript{th}, where each chain uses the received preimage to derive its own. 
If a party in any channel goes offline, the multi-path refund strategy prevents the remaining party in that channel from unilaterally completing the settlement, thus maintaining fairness. 

Note that the complete design of the settlement protocol $\mathsf{R\text{-}HTLC}$,
together with all transaction specifications,
is presented in Sec.~\ref{sec:R-HTLC}.
Fig.~\ref{fig:CCN-Protocol-2} summarizes the overall workflow of $\mathsf{CCN}$. 
\begin{figure*}[t] 
\small
\fbox{%
\parbox{\dimexpr\textwidth-2\fboxsep\relax}{%
\begin{multicols}{2}
\begin{basedescript}{\desclabelwidth{10pt}}

\item[{Off-Chain Interactions:}] 
 /* All nodes interact off-chain in the channel */\\
\quad Suppose two nodes (e.g., Alice and Emma) initiate a cross-chain interaction across $n+1$ chains (see Fig.~\ref{Fig:multi-hop});\\  
\quad All nodes exchange receipts within the channel and upload the final result to the blockchain;\\  
\quad {Initiate settlement based on $\mathsf{R\text{-}HTLC}$.}

\item[R-HTLC.Prepare:]
 /* Generate proofs to create independent hash locks for all nodes */\\
\quad Emma generates the general circuit introduced in Sec.~\ref{subsec:CCN};\\
\quad Emma generates a proof via $\Pi.\mathsf{Prove}$ for each participant with respective public inputs;\\
% \quad Emma distributes the circuit, proof, and public inputs to the corresponding nodes;\\
\quad Each node verifies the public inputs using $\Pi.\mathsf{Verfiy}$. 

\item[R-HTLC.Lock:]
 /* All nodes lock the funds in each channel */\\
\quad On the first chain, Alice sends funds with hash locks $(h(s_{11}),h(s_{12}))$ to smart contract $\xi^\alpha$ via $\mathsf{Tx}_{\text{Lock}}$;\\  
\quad $\xi^\alpha$ locks the funds using $(h(s_{11}), h(s_{12}))$, sets a time lock $T_1$, and updates state to $\mathsf{LOCK}$;\\
\quad $\xi^\alpha$ activates the hourglass mechanism;\\ 
\quad Bob verifies the correctness of the hash locks;\\ 
\quad Repeat sequentially from the first chain to the $(n+1)$\textsuperscript{th} chain;\\ 
\quad \textbf{Waiting for unlock (Hourglass Mechanism):} 
\begin{adjustwidth}{1em}{0pt}
Each node (e.g., Alice) can choose to send $\mathsf{Tx}_{\text{Check}}$ to transfer a portion (e.g., 5) of available funds to another node (e.g., Dave); \\ 
 $\xi^\alpha$ checks if 5 is within the available fund, sends the commitment $\mathsf{cmt}$ to Alice, and updates the available fund;\\
 Alice forwards $\mathsf{cmt}$ to Dave. 
\end{adjustwidth}

\item[R-HTLC.Unlock:]
 /* Nodes in the same channel reach consensus and complete unlocking based on proof */\\
\quad Bob sets his desired funds to 27 on Chain~$\alpha$, computes $h(h(s_{11}), 27)$, and signs it (marked as $\sigma^{\alpha}_{BA}$);\\ 
\quad Bob sends $(h(h(s_{11}), 27), h(s_{11}), 27)$ to Alice;\\
\quad Alice validates $\sigma^{\alpha}_{BA}$, signs $h(h(s_{11}), 27)$ (marked as $\sigma^{\alpha}_{AB}$), and returns it to Bob;\\
\quad Repeat the above process until Dave and Emma in the $(n+1)$\textsuperscript{th} chain reach consensus;\\
\quad Emma generates $\pi^{\mathsf{Sha.}}_{CP}=\Pi.\mathsf{Prove}$;\\                
\quad Emma constructs $\Lambda^{\mathsf{Unl.}}$ (see Fig.~\ref{Fig:unlock-circuit}) and generates $\pi^{\mathsf{Unl.}}$;\\
\quad Emma sends $\pi^{\mathsf{Unl.}}$ via $\mathsf{Tx}_{\text{Unlock}}$ to $\xi^\omega$;\\
\quad $\xi^\omega$ asserts state = $\mathsf{LOCK}$ and no timeout;\\
\quad $\xi^\beta$ verifies $\pi^{\mathsf{Unl.}}$ via $\Pi.\mathsf{Verfiy}$, transfers the locked funds to Emma, and sets state to $\mathsf{UNLOCK}$;\\
\quad Repeat sequentially from the $(n+1)$\textsuperscript{th} chain to the first chain.

\item[R-HTLC.Refund:]
 /* Consider three cases: node offline scenarios and settlement coordination between two adjacent chains (e.g., Chain~$\alpha$ and Chain~$\beta$) */\\
\quad \textbf{Case 1 [Chain~$\alpha$]:}\\
\quad \quad Alice sends $\mathsf{Tx}_{\text{Refund}_1}$;\\
\quad \quad $\xi^\alpha$ asserts state = $\mathsf{UNLOCK}$ and $T > T_1$, then transfers funds per $\mathsf{Tx}_{\text{Check}}$ and remaining funds to Alice, setting state to $\mathsf{REFUND}$;\\
\quad \textbf{Case 2 [Chain~$\alpha$]:}\\
\quad \quad Alice sends $(s_{12},z_{12})$ via $\mathsf{Tx}_{\text{Refund}_2}$;\\
\quad \quad $\xi^\alpha$ asserts state = $\mathsf{LOCK}$ and $T > T_1$, verifies $(s_{12},z_{12})$ against $h(s_{12},z_{12})$, and sets a time lock $T_3$; \\
\quad \quad Miners collect preimages and send $s_{11}$ via $\mathsf{Tx}_{\text{Appeal}}$;\\
\quad \quad $\xi^\alpha$ asserts $T < T_3$, verifies $s_{11}$, transfers the available fund to the miner, and sets state to $\mathsf{APPEAL}$;\\
\quad \quad When $T > T_3$ and state != $\mathsf{APPEAL}$, $\xi^\alpha$ transfers funds per $\mathsf{Tx}_{\text{Check}}$ and remaining funds to Alice, then sets state to $\mathsf{REFUND}$;\\
\quad \textbf{Case 3 [Chain~$\alpha$]:}\\
\quad \quad  Alice constructs $\Lambda^{\mathsf{Unl.}}_{A\xi}$ (cf. Fig.~\ref{Fig:unlock-circuit}), generates $\pi^{\mathsf{Unl.}}_{A\xi}$ via $\Pi.\mathsf{Prove}$, and sends it via $\mathsf{Tx}_{\text{Refund}_3}$ to $\xi^\alpha$;\\
\quad \quad $\xi^\alpha$ asserts state = $\mathsf{LOCK}$ and $T > T_1$, verifies $\pi^{\mathsf{Unl.}}_{A\xi}$, transfers funds to Bob and per $\mathsf{Tx}_{\text{Check}}$ to the node, then sets state to $\mathsf{REFUND}$;\\[1ex]
% \quad \quad [Malicious behavior] If Alice sends $(s_{12},z_{12})$ as in Case 2, then it is treated as Case 2;\\[1ex]
\quad \textbf{Case 1/2/3 [Chain~$\beta$]:}\\
\quad \quad Bob sends $\mathsf{Tx}_{\text{Refund}_1}$;\\
\quad \quad $\xi^\beta$ asserts state = $\mathsf{UNLOCK}$ and $T > T_2$, transfers funds per $\mathsf{Tx}_{\text{Check}}$ and remaining funds to Bob; \\ 
\quad \quad $\xi^\beta$ sets state to $\mathsf{REFUND}$. 
\end{basedescript}
\end{multicols}\vspace{-10pt}
}%
}
\caption{The workflow of $\mathsf{CCN}$ across multiple blockchains.}  \label{fig:CCN-Protocol-2}
\end{figure*}

\section{Security Analysis}~\label{sec:Analysis}
%\textcolor{red}{yhguo: Revising}
%
% In this section, we analyze whether our scheme can achieve the goals of security and privacy (as outlined in Sec.~\ref{subsec: model}). 
In this section, we analyze the atomicity of $\mathsf{CCN}$ for cross-chain execution, which is a fundamental property required by cross-chain protocols~\cite{tsabary2021mad,thyagarajan2022universal}. 
%We show that the newly introduced hourglass mechanism and multi-path refund strategy do not violate atomicity. 
In addition, we analyze the privacy guarantees of the scheme, while the security against offline attacks is further validated through experimental evaluation.

\subsection{Atomicity} \label{sec:atmoic} 
\begin{mytheorem}[Atomicity]
\label{thm:atomicity}
Assume that the blockchains execute all submitted transactions correctly, the hash functions used in $\mathsf{R\text{-}HTLC}$ are collision-resistant,
and  the employed zero-knowledge proof system is sound.
Then $\mathsf{CCN}$ satisfies cross-chain atomicity. 
\end{mytheorem}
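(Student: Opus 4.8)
The plan is to establish atomicity by a downward induction over the $n{+}1$ chains, showing that the decision to settle or to refund propagates consistently from the final chain $\omega$ back to the first chain $\alpha$, so that exactly one of the two disjuncts in Definition~\ref{Def:atomicity} always holds. The backbone of the argument is the dependency structure of the hash locks fixed in $\mathsf{R\text{-}HTLC}.\mathsf{Prepare}$: by the general circuit, the preimage $s_{i1}$ required to unlock on chain $i$ satisfies $s_{i1} = s_{(i+1)1} \oplus z_{i2}$, so $s_{i1}$ becomes computable precisely once chain $(i{+}1)$ has been unlocked, revealing $s_{(i+1)1}$, by a party holding $z_{i2}$. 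Combined with the strictly decreasing time locks $T_1 > T_2 > \cdots$, this is what forces the all-or-nothing outcome.

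First I would isolate the two clean branches. If the final receiver never submits a valid $\mathsf{Tx}_{\text{Unlock}}$ before its timeout, then no $s_{(i+1)1}$ is ever published; by collision resistance of the hash and soundness of the zk-SNARK, no party on any earlier chain can produce an accepting proof for $\Lambda^{\mathsf{Unl.}}$, so every lock times out and each participant reclaims its assets through $\mathsf{Tx}_{\text{Refund}_1}$ or $\mathsf{Tx}_{\text{Refund}_2}$ (Cases 1 and 2). Conversely, if the final receiver unlocks, I would prove by downward induction on $i$ that each honest intermediary can, within its own time lock $T_i > T_{i+1}$, derive $s_{i1}$ from the now-public $s_{(i+1)1}$ and its private $z_{i2}$, generate an accepting proof for $\Lambda^{\mathsf{Unl.}}$, and unlock before $T_i$ expires. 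Hence the cascade reaches chain $\alpha$ and every honest participant obtains its agreed output.

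The delicate part is reconciling these branches with the offline and adversarial behaviors captured by the multi-path refund, i.e., Case 3. I would argue that once chain $(i{+}1)$ is unlocked but the intermediary on chain $i$ is passively offline, the preceding party cannot break atomicity: its only non-assisting route is $\mathsf{Tx}_{\text{Refund}_2}$, which forces it to reveal $(s_{i2}, z_{i2})$; but since $s_{(i+1)1}$ is already public, any miner can combine it with $z_{i2}$ to recover $s_{i1}$, check it against $h(s_{i1})$, and claim the locked funds via $\mathsf{Tx}_{\text{Appeal}}$ within $T_3$. The malicious refund is therefore strictly dominated, leaving $\mathsf{Tx}_{\text{Refund}_3}$---which settles the offline party's funds as intended---as the only rational path. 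Soundness of the zk-SNARK is again essential, guaranteeing that the assisting proof genuinely enforces the intended allocation and cannot be forged to misdirect funds.

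Finally I would assemble these observations into the two disjuncts of Definition~\ref{Def:atomicity}. The main obstacle I anticipate is the bookkeeping introduced by the hourglass mechanism: because part of each locked amount may already have been converted into spendable commitments $\mathsf{cmt}$ before settlement, the notion of ``agreed output'' must be read as the residual locked value net of honored commitments, and I would need to verify that every refund path consistently settles outstanding $\mathsf{Tx}_{\text{Check}}$ commitments before returning the remainder, so that no honest participant is over- or under-paid in either branch.
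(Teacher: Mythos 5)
Your proposal is correct and follows essentially the same route as the paper's proof: a case analysis over the unlock and the three refund paths, using the XOR-linkage of secrets and the ordered time locks to propagate settlement, the miner-appeal mechanism to deter the malicious $\mathsf{Tx}_{\text{Refund}_2}$ in Case~3, and an explicit accounting of outstanding $\mathsf{cmt}$ commitments to reconcile the hourglass mechanism with the ``agreed output'' in Definition~\ref{Def:atomicity}. The only cosmetic differences are that you organize the all-unlock branch as an explicit downward induction over the chains (which the paper leaves implicit) and argue Case~3 constructively, whereas the paper closes with a reduction showing that any atomicity violation breaks contract correctness, collision resistance, or zk-SNARK soundness.
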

\begin{proof}[Proof Sketch]
We show that if an adversary could violate atomicity in $\mathsf{CCN}$, 
then either some smart contract does not follow the protocol specification, 
or the collision resistance of the hash function is broken,
or the soundness of the zero-knowledge proof system is violated. 

A run of $\mathsf{CCN}$ consists of two layers:
the off-chain channel interactions that produce intra-chain and cross-chain receipts,
and the on-chain settlement that is governed by $\mathsf{R\text{-}HTLC}$ contracts. 
The hourglass mechanism and the commitments $\mathsf{cmt}$ only determine
how much of the locked value is treated as frozen or available, 
but every transfer of funds is still executed through the phases
$\mathsf{R\text{-}HTLC.Lock}$, $\mathsf{R\text{-}HTLC.Unlock}$ or 
$\mathsf{R\text{-}HTLC.Refund}$. 
Therefore, the atomicity of $\mathsf{CCN}$ follows directly
from the atomicity properties enforced by $\mathsf{R\text{-}HTLC}$. 

\noindent{\bf Structure of R-HTLC.} 
For each hop, we have secrets and hash locks. 
The prepare phase enforces the relations
$s_{11} = s_{21} \oplus z_{12}$ and $s_{12} = s_{11} \land s_{21}$ (and their generalizations), 
and binds the hash locks to these secrets through zk-SNARK proofs.
By soundness, any on-chain verification of a proof for these circuits implies that the committed hash values and secrets satisfy the prescribed relations, except with negligible probability. 
By collision resistance, the hash image of any secret is unique for a given hash function. 

Funds can leave a contract in only two ways:
either through an unlock transaction $\mathsf{Tx}_{\text{Unlock}}$
based on a valid proof for the unlock circuits, 
or through one of the refund transactions
$\mathsf{Tx}_{\text{Refund}_1}$, $\mathsf{Tx}_{\text{Refund}_2}$ or
$\mathsf{Tx}_{\text{Refund}_3}$, possibly followed by an appeal transaction
$\mathsf{Tx}_{\text{Appeal}}$.
Since the underlying blockchains are assumed to execute all submitted transactions correctly, no other transfer of locked funds is possible. 

\noindent{\bf Normal Unlock Executions.}
Consider first the case in which all relevant parties remain online and
the unlock phase completes on all involved chains.
On each chain, the amount to be unlocked is agreed off-chain and fixed
through signatures on values of the form $h(h(s_{j1}), v_j)$.
The unlock circuits verify both the correctness of the secret
and the correctness of the signatures.
By the soundness of the proof system and the unforgeability of signatures,
any successful unlock on a chain implies that the underlying secrets
and signatures are consistent with the agreed amounts. 
Since the secrets across chains are linked through the prepare constraints
(for example $s_{11} = s_{21} \oplus z_{12}$),
revealing a secret that allows an adversary to unlock value on one chain
also enables the honest counterparty to unlock the corresponding value on the other chain.
Hence, in a normal execution, either all chains unlock according to the agreed balances,
or no chain unlocks, which satisfies atomicity.

\noindent{\bf Refund Cases and Hourglass Mechanism.}
We consider executions in which some parties go offline, 
and settlement proceeds via the refund mechanisms. 
We follow the three cases described in $\mathsf{R\text{-}HTLC.Refund}$. 

In Case 1, both sides have completed the unlock operation. 
The refund transaction $\mathsf{Tx}_{\text{Refund}_1}$ then returns
the remaining available funds, after accounting for commitments and unlocked value, 
to the original locker. 
Since all transfers are determined by previously recorded 
$\mathsf{Tx}_{\text{Lock}}$ and $\mathsf{Tx}_{\text{Check}}$ transactions,
and the contract only distributes the remaining balance, 
no honest party can lose funds while another gains more than agreed. 

In Case 2, neither side completes the unlock and the locker reveals $(s_{12},z_{12})$ 
through $\mathsf{Tx}_{\text{Refund}_2}$.
The contract checks that these values match the hash lock $h(s_{12},z_{12})$. 
If they do, the contract proceeds to refund the funds after $T_3$.
If an adversary attempts to cheat by providing inconsistent values,
the proof or hash check fails and no funds are released. 
Thus, either the honest locker eventually recovers the locked funds,
or no transfer occurs.

In Case 3, one side (for example Carol) completes the unlock on one chain,
while the other side (Bob) fails to unlock before timeout. 
The protocol offers two paths. 
Through $\mathsf{Tx}_{\text{Refund}_3}$, Alice can generate a proof
for a circuit similar to $\Lambda^{\mathsf{Unl.}}_{B\xi}$ and help Bob
obtain his rightful funds.
If Alice instead behaves maliciously and uses $\mathsf{Tx}_{\text{Refund}_2}$ to claim the funds intended for Bob, she must reveal $(s_{12},z_{12})$.
At that point the corresponding secret $s_{21}$ has already been revealed
on the other chain, so miners can derive $s_{11}$ and submit
$\mathsf{Tx}_{\text{Appeal}}$.
By soundness and collision resistance, the contract will accept only a correct $s_{11}$ before $T_3$, in which case the remaining fund (including Alice's share) is transferred to the miner. 
Therefore, Alice cannot both obtain the remote funds and recover her local funds, and thus gains no benefit from deviating from the protocol. 
%
% An honest Bob may temporarily remain offline, but cannot lose funds while Alice gains value beyond what is enforced by the protocol. 

The hourglass mechanism allows a participant to gradually obtain
available funds before the timeout expires.
However, these funds do not correspond to immediate on-chain transfers.
Instead, every expenditure of available funds is recorded as a commitment
$\mathsf{cmt}$ issued by the smart contract, which specifies the receiver
and the committed amount. 
Such commitments only become effective during the subsequent refund phase,
where they are validated and settled by the contract. 
Consequently, although a participant may temporarily utilize part of its locked funds 
through commitments, the contract maintains an explicit record of all outstanding 
commitments and enforces that the sum of frozen funds, available funds,
and committed amounts never exceeds the originally locked value.
Any attempt to overspend or reuse the same funds in multiple commitments
is rejected by the contract. 
Therefore, the hourglass mechanism does not enable immediate double spending
or inconsistent transfers, and all interim fund usage is ultimately
accounted for and resolved during the refund process. 

% The hourglass mechanism only affects how frozen funds gradually become available
% and how many commitments can be issued, but at any time the sum of frozen funds,
% available funds and redeemed commitments equals the original locked amount.
% Since the contracts always enforce this invariant, the hourglass mechanism
% cannot create or destroy value, and it cannot introduce an execution
% in which an honest party loses funds while a counterparty gains.

\noindent{\bf Reduction to Cryptographic Assumptions.}
Suppose that there exists a PPT adversary
that causes a violation of cross-chain atomicity in $\mathsf{CCN}$ with
non-negligible probability. Then, by the above arguments, there must exist an execution in which a smart contract releases funds in a way that is inconsistent with the specified protocol logic, or an execution in which an adversary unlocks or refunds funds while preventing an honest counterparty from obtaining the corresponding value. 
In the first case, the smart contract does not behave according to its specification, contradicting our assumption. 
In the second case, the adversary must either: produce two different secrets that map to the same hash value, which breaks collision resistance, or convince a contract to accept an invalid proof for one of the prepare or unlock circuits, which violates the soundness of the zero-knowledge proof system. 
Hence, any non-negligible violation of atomicity would imply a break of one of the stated assumptions. 

Therefore, under the given assumptions, $\mathsf{CCN}$ satisfies cross-chain atomicity.
\end{proof}

\subsection{Unlinkability} 
\label{sec:unlink-main} 

\begin{mytheorem}[Unlinkability] \label{thm:unlinkability} 
Consider the security game $\mathsf{Game}^{\mathsf{Unlink}}_{\mathcal{A}}(\lambda)$ in Definition~\ref{Def:unlinkability}.
Assuming that the hash function is collision-resistant and the zero-knowledge proof protocol is secure,  $\mathsf{CCN}$ achieves \emph{unlinkability}, that is, 
the adversary’s distinguishing advantage $\mathsf{Adv}^{\mathsf{Unlink}}_{\mathcal{A}}(\lambda)$ is negligible in $\lambda$.
Equivalently, no probabilistic polynomial-time adversary can determine which sender--receiver pair was involved in the challenge transaction
with probability non-negligibly greater than $1/2$.
\end{mytheorem}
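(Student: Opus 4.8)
The plan is to prove unlinkability through a sequence of hybrid games (game hopping), starting from the real game $\mathsf{Game}^{\mathsf{Unlink}}_{\mathcal{A}}(\lambda)$ and ending in a game whose transcript is distributed independently of the challenge bit $b$. In that final game the adversary can do no better than random guessing, so its advantage is $0$; by the triangle inequality, the advantage $\mathsf{Adv}^{\mathsf{Unlink}}_{\mathcal{A}}(\lambda)$ in the real game is then bounded by the sum of the (negligible) gaps between consecutive hybrids. The two properties I will reduce to are the zero-knowledge property of the zk-SNARK and the collision resistance (together with the hiding/one-wayness implicit in modeling the hash $h$) of the hash function. The non-full-corruption assumption from the threat model guarantees at least one honest channel on the path, and this honest channel is the pivot of the entire argument.

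First I would fix the observable output $\mathcal{O}$: it consists of the on-chain transactions produced by $\mathsf{R\text{-}HTLC}$ (the hash locks $h(s_{11}), h(s_{12},z_{12}), h(s_{21})$ carried in $\mathsf{Tx}_{\text{Lock}}$, the public inputs and the zk-SNARK proofs $\pi$ carried in $\mathsf{Tx}_{\text{Unlock}}$ and the refund transactions, and the settled amounts), together with the internal state of the corrupted intermediary nodes. The hybrids then proceed as follows. In $\mathsf{Hyb}_1$ I replace every zk-SNARK proof for the prepare and unlock circuits with one produced by the zero-knowledge simulator, so that the proofs no longer depend on the private secrets $(s_{11},s_{12},s_{21},\dots)$; indistinguishability from $\mathsf{Hyb}_0$ follows directly from the zero-knowledge property, and any distinguisher yields an attacker against the proof system. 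In $\mathsf{Hyb}_2$ I isolate an honest channel guaranteed by non-full corruption and resample the hidden mask and secret there, so that the hash lock exposed on chain $i$ and the hash lock exposed on chain $i+1$ become derived from independent fresh randomness rather than from the relation $s_{(i+1)1}=s_{i1}\oplus z_{i2}$. Since the witnesses are already removed from the proofs, this change is invisible to the adversary unless it can correlate the two hash images, which would contradict the hiding/one-wayness of $h$, while collision resistance prevents it from forging a cross-linking preimage.

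After $\mathsf{Hyb}_2$ the honest segment decouples the first part of the path from the second, so the joint distribution of $\mathcal{O}$ restricted to the adversary's view is the same whether the realized path corresponds to $(S_0,R_0)$ or $(S_1,R_1)$: the simulated proofs are witness-independent, the boundary hash locks are fresh uniform images, and the settled amounts coincide because the game only compares challenge pairs that transfer the same value (otherwise the public amounts would trivially distinguish, so amount-compatibility is a well-formedness condition on $\mathsf{Game}^{\mathsf{Unlink}}_{\mathcal{A}}$). Consequently the transcript is independent of $b$, giving $\Pr[b'=b]=1/2$ in the final hybrid and hence $\mathsf{Adv}^{\mathsf{Unlink}}_{\mathcal{A}}(\lambda)\le \mathsf{negl}(\lambda)$.

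The main obstacle I expect is the partial-corruption bookkeeping in $\mathsf{Hyb}_2$. Unlike plain HTLC, where a single reused $h(s)$ syntactically chains all hops and makes linkage trivial, $\mathsf{CCN}$ breaks this chain via the masking $s_{(i+1)1}=s_{i1}\oplus z_{i2}$ and the zero-knowledge hiding of the preimages; the delicate point is that the adversary legitimately knows the secrets and masks of the hops it controls, so the indistinguishability argument must be localized to an honest channel and must show that knowledge of one masked endpoint reveals nothing about the other. This is precisely where the hiding/one-wayness of $h$ under simulated proofs is essential (and where, strictly speaking, the stated collision-resistance assumption must be read as $h$ being a random oracle), and where the assumption that the path is never fully corrupted is used in a non-negotiable way. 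A secondary subtlety worth spelling out is the variable amounts enabled by the hourglass mechanism: I would argue that the interim commitments $\mathsf{cmt}$ and intra-chain receipts are settled only in the refund phase and carry no cross-chain identifier, so they add no linking vector beyond the already-handled amount-compatibility condition.
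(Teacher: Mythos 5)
Your proposal is essentially a formalized version of the paper's argument: the paper's proof sketch also (i) replaces the zk-SNARK proofs by simulated ones using the zero-knowledge property, (ii) replaces the published hash images by random but internally consistent strings since preimages are never revealed to the adversary, and (iii) invokes the non-full-corruption assumption to rule out end-to-end address linkage, before concluding with a reduction to the hash function or the proof system. Your explicit hybrid structure ($\mathsf{Hyb}_1$ for simulation, $\mathsf{Hyb}_2$ for resampling the mask at an honest channel) is a cleaner and more rigorous way to organize the same ideas, and your remark that collision resistance alone does not give hiding of preimages --- so that the hash must really be treated as one-way or as a random oracle --- is a legitimate gap in the paper's stated assumptions that your write-up makes visible rather than papering over.

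The one place where you genuinely diverge is the treatment of amounts. You handle them by declaring amount-compatibility a well-formedness condition on $\mathsf{Game}^{\mathsf{Unlink}}_{\mathcal{A}}$, i.e., the two challenge pairs must transfer the same value. The paper instead leans on a design feature of $\mathsf{CCN}$: the value settled on-chain is the aggregate of cross-chain receipts (fixed by the protocol) and intra-chain receipts (arbitrary and unpredictable), so the committed amount looks random to the adversary and carries no linking signal even when the underlying transfers differ. This matters because the paper explicitly motivates $\mathsf{CCN}$ over bare $\mathsf{R\text{-}HTLC}$ by the observation that amounts would otherwise break unlinkability; your restriction of the game quietly removes the very threat that the intra-chain-receipt aggregation is designed to neutralize, and Definition~\ref{Def:unlinkability} as written does not impose equal amounts. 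Your version is still internally consistent, but it proves unlinkability for a weaker game; to match the paper's claim you would need to add a third hybrid in which the on-chain settled amounts are replaced by values drawn from the distribution induced by arbitrary intra-chain receipts, and argue that this distribution is (statistically or computationally) independent of the challenge bit. Everything else in your plan, including the observation that $\mathsf{cmt}$ commitments settle only at refund time and carry no cross-chain identifier, lines up with the paper.
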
 

\begin{proof}[Proof Sketch]
Let $\mathcal{A}$ be a PPT adversary participating in 
$\mathsf{Game}^{\mathsf{Unlink}}_{\mathcal{A}}(\lambda)$.
As in Definition~\ref{Def:unlinkability}, the challenger initializes the system,
selects two distinct sender–receiver pairs $(S_0,R_0)$ and $(S_1,R_1)$,
samples a random bit $b \in \{0,1\}$ and simulates a cross-chain payment from $S_b$ to $R_b$,
producing the observable transcript $\mathcal{O}$, which includes
all on-chain transactions and publicly visible protocol messages.
The adversary outputs a guess $b'$.
We argue that the advantage
$\mathsf{Adv}^{\mathsf{Unlink}}_{\mathcal{A}}(\lambda) 
= \bigl|\Pr[b' = b] - \tfrac{1}{2}\bigr|$ is negligible.

\noindent{\bf Address-based Linkage.} 
In $\mathsf{CCN}$, payments are routed through intermediary participants in a hop-by-hop manner. As a result, the direct link between the external addresses of $S_b$ and $R_b$ 
does not appear as a single on-chain transaction. Unless all participants on a single routing path collude and share their private information, 
which is excluded by the standard threat model assumed in existing privacy-preserving
cross-chain payment schemes,  
the adversary does not observe a complete end-to-end mapping between $S_b$ and $R_b$.
Hence, address-level analysis alone does not provide a non-negligible distinguishing advantage.  

\noindent{\bf Hash Locks and ZKPs.}
Each hop  uses independent secrets and publishes
only their hash images. 
Cross-hop consistency is enforced by ZKPs for the general circuit 
and the unlock circuit $\Lambda^{\mathsf{Unl.}}$.
By the zero-knowledge property, the proofs can be replaced by simulated proofs 
without changing the distribution of $\mathcal{A}$'s view in any computationally noticeable way. 
Similarly, since the adversary does not learn the preimages underlying these hash values 
(except in the appeal cases), 
the hash outputs can be replaced by random but internally consistent strings 
without giving $\mathcal{A}$ a non-negligible distinguishing advantage, 
unless the hash function or the zero-knowledge proof protocol is insecure. 

\noindent\textbf{Amounts.}
In $\mathsf{CCN}$, the value settled on-chain is computed as the aggregate of multiple off-chain receipts. Specifically, each settlement amount consists of both intra-chain receipts
and cross-chain receipts accumulated along the execution of the protocol. While the cross-chain receipts are determined by the protocol logic,
the intra-chain receipts can be generated arbitrarily and are inherently unpredictable.
As a result, the final aggregated amount committed on-chain appears as a random-looking value from the adversary’s perspective.

Consequently, even when observing all on-chain transaction values, the adversary cannot reliably correlate two executions or infer whether they originate from the same sender–receiver pair.
Thus, amount information in $\mathcal{O}$ does not provide any non-negligible advantage for breaking unlinkability. 

\noindent{\bf Reduction Argument.}
Suppose, for the sake of contradiction, that there exists a PPT adversary $\mathcal{A}$
such that $\mathsf{Adv}^{\mathsf{Unlink}}_{\mathcal{A}}(\lambda)$ is non-negligible.
Then one can build a distinguisher that uses $\mathcal{A}$ as a subroutine to distinguish
between real and simulated proofs, or between real hash outputs and random values,
depending on which part of the transcript $\mathcal{A}$ exploits.
This would contradict either the security of the underlying hash function
or the zero-knowledge guarantees of the proof system. 
Therefore, any distinguishing advantage must be negligible, and $\mathsf{CCN}$
achieves unlinkability in the sense of Definition~\ref{Def:unlinkability}. 
\end{proof}

\section{Implementation and Performance Evaluation}~\label{sec:Test} 
%
%\textcolor{red}{yhguo: Revising}
%
In this section, we present the implementation of $\mathsf{CCN}$ and test its performance. These experiments complement the security analysis and validate the practicality of handling offline scenarios. %, seeking to answer the following research questions: 

%{\bf RQ1:} With enhanced privacy and security, how does zk-SNARK impact $\mathsf{CCN}$ performance, especially concerning the growing cross-chain hops? Is the associated overhead feasible compared to state-of-the-art schemes based on zero-knowledge proofs? 

%{\bf RQ2:} How does the on-chain performance (such as latency, throughput, and gas consumption) of $\mathsf{CCN}$ in a multi-chain network? Does this solution 
%offer advantages over existing cross-chain solutions? 

\subsection{Implementation}\label{subsec: Implementation}
%
%We implement $\mathsf{CCN}$ in a multi-chain environment, and our implementation mainly includes the following components: 
We deploy $\mathsf{CCN}$ in a multi-chain environment, and our implementation mainly includes the following components: 

\noindent{\bf Multi-Chain Networks.} 
We implement our scheme on both Ethereum and Cosmos to evaluate its practical feasibility. All entities described in Sec.~\ref{subsec: model} are instantiated as blockchain nodes. For Ethereum, we use the go-ethereum client\cite{codeEthereum}, where each node is simulated in an isolated Docker container. Smart contracts are written in Solidity~\cite{codeSolidity} and deployed on Ethereum to enforce conditional transactions. The implementation includes over 200 lines of Solidity code covering all contract functionalities. Each node interacts with the contracts using web3.py, enabling deployment and invocation of contract functions. For Cosmos, we adopt Juno~\cite{juno}, a blockchain built on the Cosmos SDK and secured by the Tendermint consensus protocol.
To streamline deployment and testing, we implement over 150 lines of Bash scripts to automate blockchain initialization and transaction execution. 
Smart contracts on Juno are developed in Rust and compiled into WebAssembly (Wasm) using Cargo, with over 400 lines of code realizing the core logic. 
%
% All entities (introduced in Sec.~\ref{subsec: model}) act as nodes on the blockchain. 
% For the Ethereum, we adopts go-ethereum~\cite{codeEthereum} to build a chain. We utilize Docker to simulate blockchain nodes, with each Docker instance representing a node. 
% %
% The smart contract operates on Ethereum and enforces the terms of the contract when predefined conditions are met. We employ Solidity~\cite{codeSolidity} to write smart contracts and implement the transactions (introduced in Sec.~\ref{sec:R-HTLC}) with 200+ lines of code. Each node uses web3.py~\cite{codeWeb3} to deploy
% and call the smart contract. For Cosmos, we adopts Juno~\cite{juno}, whose ecosystem is pioneering CosmWasm development and adoption, allowing developers to deploy secure inter-chain smart contracts in Rust. The Juno proof of stake blockchain is built using the Cosmos SDK and reaches network agreement via Tendermint Consensus. Miners play a role in validating, processing, and executing these transactions.
%
%\textcolor{red}{yhguo: Revising + TODO}
%

We deploy the blockchain using multiple cloud servers from five countries (China, Germany, Japan, Singapore, and the United States). 
The servers in China are configured with S6.2XLARGE16 specifications, each operating on the Ubuntu 20.04 system with an Intel Ice Lake processor. They are equipped with 8 vCPUs running at a frequency of 3.3 GHz and 16 GB of RAM. 
The other servers feature S5.2XLARGE16 specifications, run on the Ubuntu 20.04 system with an Intel Xeon Cascade Lake 8255C/Intel Xeon Cooper Lake processor, and have 8 vCPUs operating at a frequency of 3.1 GHz, along with 16 GB of RAM.  
We initiate 4 Docker instances on each server to form multiple blockchains. 
%
%To link multiple Docker instances into a blockchain, we write a script based on Python (comprising 2,000+ lines of code) that manages the entire process, from launching and connecting the Docker instances to facilitating the mining procedure. 

%\noindent{\bf Cosmos.} 

%{\bf Smart contract.} 
%
%We make modifications to the source code of Solidity, adding a function to smart contracts for proof verification, and setting the gas required to invoke this function at 440,000 gas (equivalent to Groth16~\cite{xie2022zkbridge}). 
%

\noindent{\bf zk-SNARK.} We utilize xjsnark~\cite{codeXjsnark} and libsnark~\cite{codeLibsnark} to implement zk-SNARK. 
xJsnark is a high-level framework for users to write the circuits in zkSNARK, and  
we use it to obtain the circuits in our scheme, which are implemented with 2,000+ lines of code. 
Then, we adopt the Groth16 algorithm~\cite{groth2016size} in libsnark to realize the initialization ($\mathsf{Setup}$), generation ($\mathsf{Prove}$), and verification ($\mathsf{Verify}$) of zk-SNARK. A prover first generates a circuit based on xJsnark, then utilizes libsnark to initialize and generate a proof. 
Subsequently, a verifier can call smart contracts or use libsnark to validate the generated proof. 
%
% For the experiment setup, 
% we deploy the blockchain using multiple cloud servers from five countries (China, Germany, Japan, Singapore, and the United States). 
% %
% The servers in China are configured with S6.2XLARGE16 specifications, each operating on the Ubuntu 20.04 system with an Intel Ice Lake processor. They are equipped with 8 vCPUs running at a frequency of 3.3 GHz and 16 GB of RAM. 
% %
% The other servers feature S5.2XLARGE16 specifications, run on the Ubuntu 20.04 system with an Intel Xeon Cascade Lake 8255C/Intel Xeon Cooper Lake processor, and have 8 vCPUs operating at a frequency of 3.1 GHz, along with 16 GB of RAM.  
% %
% We initiate 4 Docker instances on each server to form multiple blockchains based on the Proof-of-Work consensus algorithm. 
%Our code repository is open-sourced on Github\footnote{https://anonymous.4open.science/r/CCN\_code-B062}. 
Further specifics regarding the experimental setup will be provided before presenting the performance evaluation results. 

\subsection{Performance Evaluation}\label{subsec: Performance}
In this subsection, we evaluate the performance of $\mathsf{CCN}$ in a multi-chain environment. Specifically, we focus on the following aspects: (i) the advantages of $\mathsf{CCN}$ in mitigating offline issues; (ii) the additional resource overhead introduced by the security and privacy mechanisms, assessed in terms of latency, throughput, and gas consumption; and (iii) the feasibility of deploying $\mathsf{CCN}$ across heterogeneous blockchains. 

\begin{figure}[ht]
    \centering
    \subfigure[{\bf Active offline issue}]{\includegraphics[width=0.23\textwidth]{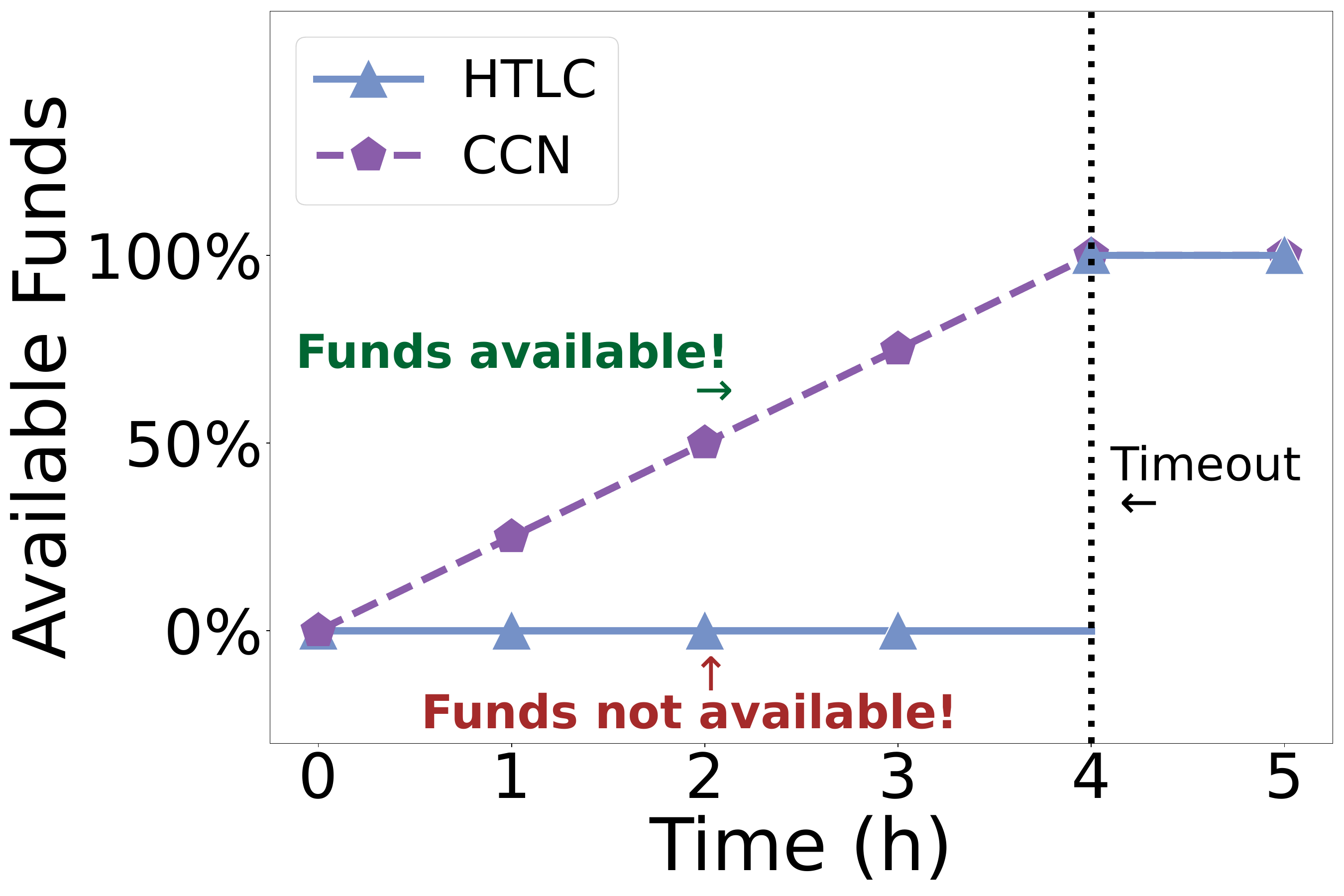}}
    \subfigure[{\bf Passive offline issue}]{\includegraphics[width=0.23\textwidth]{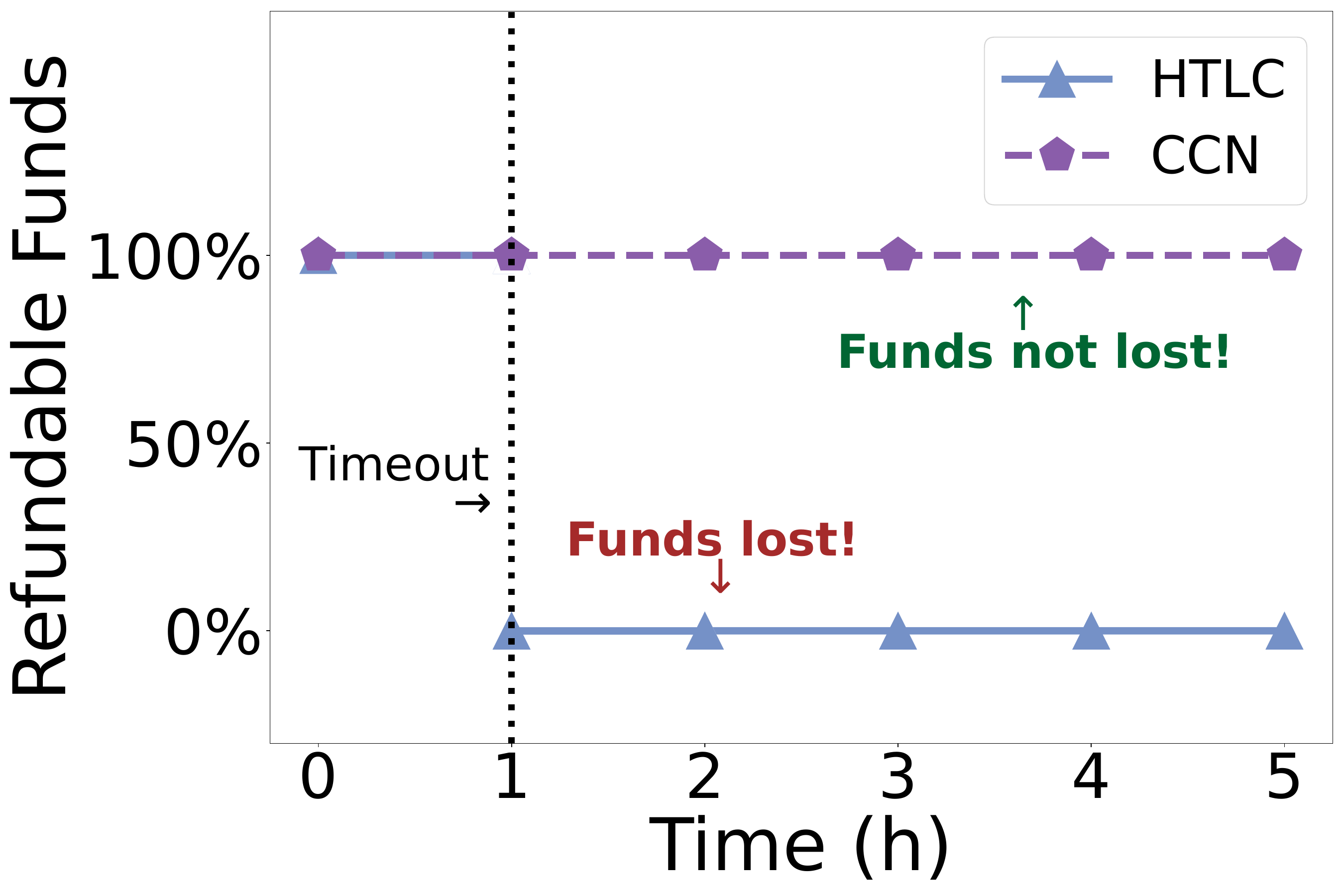}}
    \caption{Impact of active and passive offline issues on HTLC-based protocols and the mitigation by $\mathsf{CCN}$. } 
    \label{Fig:experiments-offline}
\end{figure}
\noindent{\bf Mitigating Offline Issues.}
As demonstrated in Sec.~\ref{subsec: model}, we simulate the offline issues encountered by HTLC-based schemes~\cite{tsabary2021mad,guo2023cross} and analyze their consequences. In this part, we evaluate the capability of $\mathsf{CCN}$ in mitigating these issues. 
Specifically, Fig.~\ref{Fig:experiments-offline} (a) illustrates the scenario of an active offline issue, in which a malicious node deliberately becomes unavailable. The vertical axis denotes the available funds of the honest participant. It can be observed that during the time window preceding the  timeout, the honest party is unable to access its locked assets. In contrast, $\mathsf{CCN}$ enables continuous availability of the honest party's funds, thereby mitigating the issue. 
Fig.~\ref{Fig:experiments-offline} (b) presents the passive offline issue, where an honest node becomes temporarily unavailable due to unexpected network or device failures. After the timeout, the refundable funds become vulnerable to being claimed by the counterparty. However, $\mathsf{CCN}$ ensures these funds remain secure, preventing financial loss even under uncertain recovery periods. 
\begin{figure}[ht]
    \centering
    \subfigure[{\bf Latency}]{\includegraphics[width=0.23\textwidth]{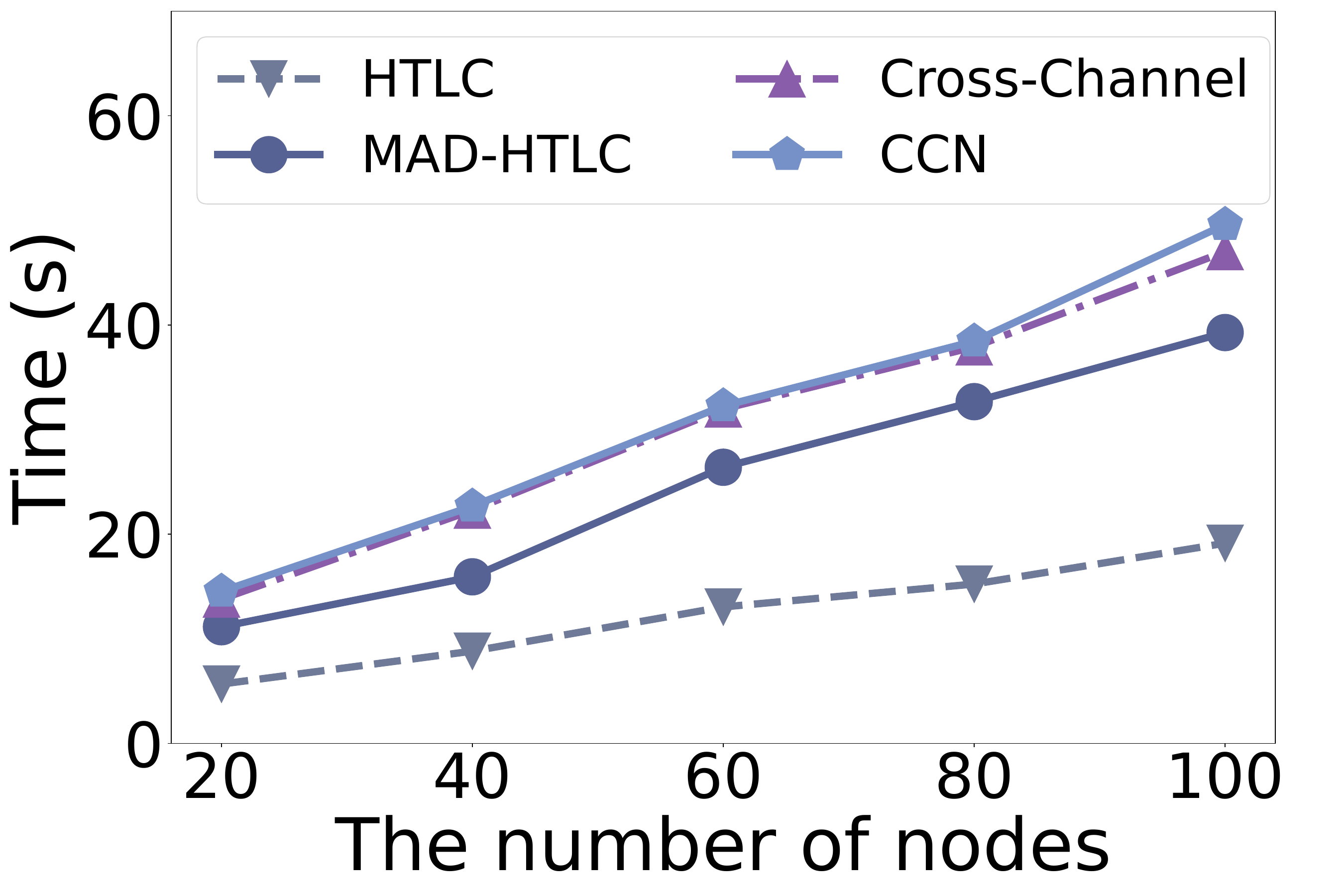}}
    \subfigure[{\bf Throughput}]{\includegraphics[width=0.23\textwidth]{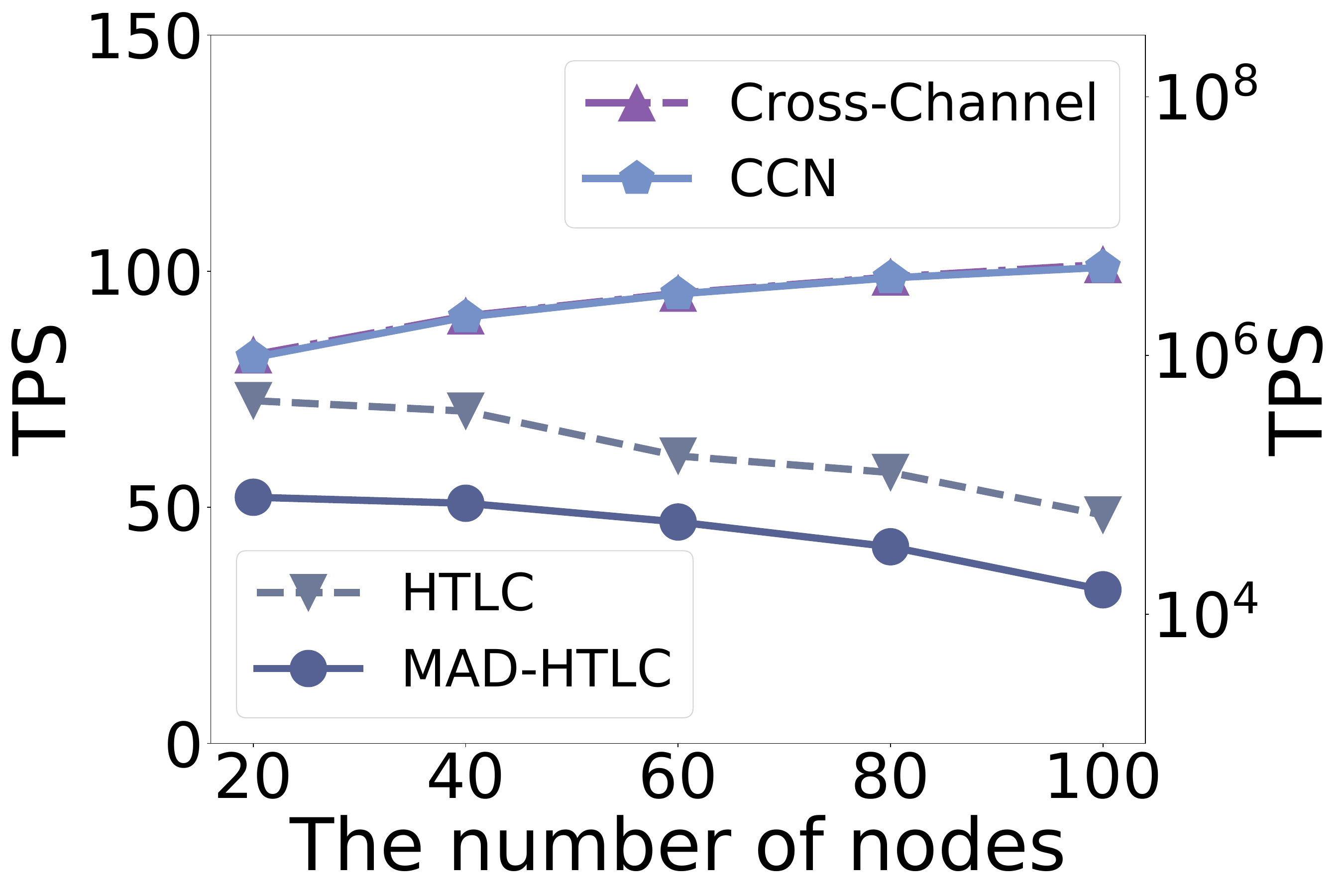}}
    \caption{Comparison experiments on latency (left) and throughput (right). } 
    \label{Fig:Compare}
\end{figure}
\noindent{\bf Comparative Analysis of On-Chain Resource Overhead.} We make a comparative analysis, focusing on on-chain latency, throughput, and gas consumption, against the three most relevant decentralized cross-chain solutions:  HTLC, MAD-HTLC~\cite{tsabary2021mad}, and Cross-Channel~\cite{guo2023cross}. %HTLC is widely adopted in real-world platforms, such as the Lightning Network~\cite{Joseph2024Lightning} and Interledger~\cite{Interledger}; MAD-HTLC is a security-enhanced HTLC; and Cross-Channel represents a recent scalable cross-chain solution. 
%In addition to latency and throughput, we further evaluated the on-chain gas consumption of $\mathsf{CCN}$ and compared it with HTLC, MAD-HTLC, and Cross-Channel, in a single-hop environment. 
We build two blockchains based on go-ethereum, configuring each with 20 to 100 nodes, while setting the number of channels to 10 times the number of nodes. This means that as the node count increases, the number of channels also increases. Each channel remains open for an average of 100 seconds and can transmit $3 \times 10^6$ $Tr$/s. 
%

%如图3所示，
As illustrated in Fig.~\ref{Fig:Compare}, the delay and throughput results demonstrate that, with the linear increase in the number of nodes, both $\mathsf{CCN}$ and Cross-Channel exhibit a proportional increase in TPS, highlighting their scalability. 
Then, we test the gas consumption required for the successful execution of different solutions. As shown in Table~\ref{Table:compareCross}, $\mathsf{CCN}$ consumes about 3,200,000 gas to process $N$ cross-chain interactions and HTLC (MAD-HTLC) needs to take about 430,000$\times N$ (750,000$\times N$) gas to process the same volume of operations. 
%
%Due to its focus on heightened security and privacy, $\mathsf{CCN}$ needs higher gas consumption compared to Cross-Channel. 
%Moreover, compared to Cross-Channel, $\mathsf{CCN}$ is particularly well-suited for multi-hop cross-chain scenarios. 

%
\begin{table}[ht]  \footnotesize
\centering
\caption{Comparison experiments on gas consumption.} 
\label{Table:compareCross} 
\begin{threeparttable}  
\begin{tabularx}{0.98\linewidth}{|>{\centering\arraybackslash}m{0.3cm}|>{\centering\arraybackslash}m{1.55cm}|>{\centering\arraybackslash}m{1.54cm}|>{\centering\arraybackslash}m{1.7cm}|>
{\centering\arraybackslash}m{1.4cm}|}
\hline     & HTLC &  MAD-HTLC & Cross-Channel & CCN \\   
\hline  Gas  & $429,532 \times N$ & $ 758,095 \times N$ & $1,330,858$ & $3,209,972$ \\  
\hline   
\end{tabularx}
\end{threeparttable}  
% \begin{tablenotes}
% \footnotesize
% \item* $N$ represents the number of cross-chain interactions. 
% \end{tablenotes}
\end{table}
\noindent{\bf Experiments on Heterogeneous Blockchains.}
%\textcolor{red}{yhguo: Revising + TODO}
To further verify the adaptability of our scheme across heterogeneous blockchain platforms, we extend our implementation from Ethereum to Juno, a Cosmos-based chain featuring a distinct technical stack. While Ethereum adopts the EVM, Juno utilizes the CosmWasm virtual machine and operates on a Tendermint consensus, offering a fundamentally different execution and consensus environment. 
%
% In this implementation, we develop the smart contracts in Rust and compile them to WebAssembly to conform with Juno’s execution environment.
To assess the practicality of our deployment, we measure the gas consumption of core operations on Juno: uploading the contract (wasm store) consumes approximately 1,617,000 gas, instantiating the contract (wasm instantiate) requires around 152,000 gas, and successful protocol execution incurs an average gas cost of 507,000 per node.

\section{Conclusion}~\label{sec:Conclusion}
In this paper, we propose $\mathsf{CCN}$, a cross-chain channel network designed to enhance security and privacy in multi-hop cross-chain interactions. $\mathsf{CCN}$ effectively addresses both offline availability issues and unlinkability challenges that arise in cross-chain protocols, ranging from single-hop to multi-hop settings. 
We first introduce the architecture and design principles of the $\mathsf{CCN}$ framework. We then present the construction of $\mathsf{R\text{-}HTLC}$ and its integration into $\mathsf{CCN}$, enabling resilience against both active and passive offline behaviors while preserving transaction unlinkability. Finally, we conduct extensive simulations to evaluate the practicality and performance of our approach. 
Future research will focus on exploring how to realize multi-hop cross-chain interactions
without relying on smart contracts,
and on optimizing ZKPs to reduce the associated resource consumption.

%\clearpage
% \section*{Open Science}
% Our work is fully compliant with the open science policy. We have provided our source code, evaluation scripts, and collected data at \href{https://anonymous.4open.science/r/CCN_code-B062/}{https://anonymous.4open.science/r/CCN\_code-B062/}. 
% Our artifacts are sufficient to be evaluated for availability, functionality, and reproducibility. 

% \section*{Ethics Considerations}
% The research conducted is in strict compliance with pertinent
% ethical considerations: 
%  \begin{itemize}
% \item \textbf{Stakeholder Perspectives and Considerations.} No competing financial interests or personal relationships exist that could be perceived as influencing the work reported in this paper.
% \item \textbf{Respect for Persons.} Our experiments did not utilize any personal data.
% \item \textbf{Beneficence.} This research introduces schemes aimed at ensuring the security and privacy of data, offering substantial benefits to society as a whole.
% \item \textbf{Justice.} The cryptographic tools developed in this study are designed to protect the security and privacy of data for all individuals equally, ensuring that benefits are distributed fairly, without bias toward any particular group.
% \item \textbf{Respect for Law and Public Interest.} The methodologies employed, including smart contracts and zero-knowledge proofs, fully comply with current laws and regulations, with no legal breaches occurring during the research process. 
% \end{itemize}
%\clearpage
%%
%% The next two lines define the bibliography style to be used, and
%% the bibliography file.
\bibliographystyle{ieeetr}
 
\bibliography{IEEEabrv, reference}

\end{document}